\newtheorem{theorem}{Theorem}
\newtheorem{lemma}[theorem]{Lemma}
\newtheorem{proposition}[theorem]{Proposition}
\newtheorem{assumption}[theorem]{Assumption}
\numberwithin{equation}{section}
\numberwithin{theorem}{section}
\newcommand{\rr}{{\mathbb{R}}}
\newcommand{\cc}{{\mathbb{C}}}
\newcommand{\nn}{{\mathbb{N}}}
\newcommand{\tr}{{\operatorname{Tr}\,}}
\newcommand{\beq}[1]{\begin{equation} \label{#1}}
	\newcommand{\eeq}{\end{equation}}
\renewcommand{\epsilon}{\varepsilon}
\def\be{\begin{equation}}
	\def\ee{\end{equation}}
\begin{document}
	
\title{   Quantum Mean-Fields Spin Systems \\ in a Random External Field}

	\author{Chokri Manai\\
	{\normalsize Courant Institute of Mathematical Sciences, New York University}}
	\date{\vspace{-.3in}}
	\maketitle

	\minisec{Abstract}
	In this work, we consider general exchangeable quantum mean-field Hamiltonian such as the prominent quantum Curie-Weiss model under the influence of a random external field. Despite being arguably the simplest class of disordered quantum systems, the random external field breaks the symmetry of the mean-field Hamiltonian and hence standard quantum de Finetti type or semiclassical  arguments are not directly applicable. We introduce a novel strategy in this context, which can be seen as non-commutative large deviation analysis, allowing us to characterize the limiting free energy in terms of a simple and explicit variational formula. The proposed method is general enough to be used for other classes of mean-field models such as multi species Hamiltonians.
	
	\bigskip
	\noindent 
	\textbf{Keywords:} mean-field models, disordered systems, quantum spin systems, free energy \\[.5ex]
	\textbf{Subject Classification:}\quad 81Q10,  81R30, 82B10, 82B20, 82B44
	
	\bigskip
	

\section{Introduction}\label{sec:intro}

Classical mean-field systems such as the classical Curie–Weiss model form an important class in statistical physics, as they are often exactly solvable and provide a good approximation to more realistic and complex lattice systems in high dimensions. Therefore, mean-field systems continue to play a central role in statistical physics, and many fundamental questions are canonically first addressed on this playground. In the case of disordered systems, in particular in spin glass theory, even today only mean-field systems such as the Sherrington–Kirkpatrick model are well understood \cite{SK75, Pa13, Ta06}, although significant recent progress has been made in proving glassy properties or the absence in the Edwards–Anderson model and in the random-field Ising model \cite{Ch15, Ch23, Ch24} The latter is a remarkably rich and challenging model reflected by the rounding phenomena in two dimensions \cite{AW89} and ferromagnetic behavior in higher dimensions \cite{BK88}; and only very recently the exponential decay of correlations has been established \cite{DX21}. Due to the complexity of the random-field Ising model, its corresponding mean-field variant is often studied as a simplification—namely, the Curie–Weiss model with a random external field \cite{BBI09, BdH15, FMP00, AF91, KL24}.

In the non-commutative case, the analysis of nearest-neighbor models is more often not tractable and, hence,  mean-field quantum spin systems are common in effective descriptions of a variety of phenomena. One example is the family of  Lipkin-Meshkov-Glick Hamiltonians, which were introduced in~\cite{LMG65-1,LMG65-2,LMG65-3}  to explain shape transitions in nuclei and are also used to describe quantum-spin tunneling in molecular magnets~\cite{MMBook14}. 
Another instance is the 
quantum Curie-Weiss Hamiltonian, which continues to draw the attention of many communities ranging from statistical physics to quantum computing~\cite{Bapst_2012,Chayes:2008aa,Landsman:2020aa,Bjornberg:2020aa,Bulekov:2021aa, MW23}. Due to the rapid progress and sustained interest in the understanding of quantum disordered models such as random matrix ensembles \cite{DG09, EPRS10, PS11, TV11}, the Anderson model \cite{AW13, AW15, DS20} and quantum glasses~\cite{KMW25, LMRW21, LRRS21, MW20, MW21, MW21b, MW22, MW23b, MW25}, it appears canonical to investigate the behavior of quantum spin systems under the influence of random external fields. And here the most natural starting point is the analysis of the thermodynamics of quantum mean-field models in the presence of a random field. Surprisingly, this simple class of disordered models has not been  analyzed yet and we aim to address this gap in the mathematical (and physical) literature. At first glance, one might expect that the methods used to solve deterministic mean-field quantum models carry over with some technical modifications to random field models.  However, the two main approaches in the literature, Lieb-Berezin bounds \cite{Ber72, Lieb73} and quantum de-Finetti arguments \cite{FSV80}, are very rigid and are largely based on the full exchangeability of the Hamiltonian and, hence, there is no easy adaptation to the apparently slight perturbation by a random field. Our second main purpose in this work is to introduce a strategy which is flexible enough to solve more general mean-field models. Our approach consists of two main ideas. First, we control the thermodynamical fluctuations of the magnetization by  a perturbation argument which originated in spin glass theory but seems to be not so well-known in other fields \cite{STT20, ST93}. This strategy is even useful for purely deterministic Hamiltonians.  Our second ingredient  is a novel approach for a microcanonical analysis in the non-commutative setting and allows us to essentially derive a non-commutative version of Varadhan's lemma in the context of mean-field models. The main new observation is that while in the quantum case the analysis of linear models is not enough to derive the free energy of the system restricted to a fixed magnetization, one only needs to further understand quadratic models to carry out a microcanonical analysis. We stress that our proof strategy is  not restricted to random-field models and we expect that it can be used to derive further novel free energy formulae - for example in the interesting case of multi-species Hamiltonians.

Let us give a precise description of the models of interest. We consider $ N $ interacting qubits, and  the corresponding Hilbert space  is the tensor product $ \mathcal{H}_N = \bigotimes_{n=1}^N \mathbb{C}^2 $. On these finite-dimensional Hilbert spaces we consider a Hamiltonian or more precisely a family of Hamiltonians
\begin{equation}
	H_{V, \mathfrak{b}} := H_{V} + H_{\mathfrak{b}},
\end{equation}
where $H_{V}$ models a deterministic mean-field spin system and $H_{\mathfrak{b}}$ the contribution of the random external field.  The parameters $V$ corresponds to a "semiclassical symbol" which is a continuous function $V : B_{\rr^3} \to \rr$ on the 3-dimensional ball  $B_{\rr^3}$ and $\mathfrak{b} = (\mathfrak{b}_x, \mathfrak{b}_y, \mathfrak{b}_z)$ is a vector-valued integrable random variable. 
  For the most prominent mean-field spin Hamiltonians, the symbol $V$ is a polynomial and in this case the mean-field Hamiltonian allows for a simple description in terms of spin operators. We focus on this situation in the rest of the introduction and postpone the slightly more general consideration to Section~\ref{sec:model}.    
 
For $1 \leq n \leq N$, the matrix-valued vector $ \mathbf{S}(n) = \mathbbm{1} \otimes \dots \otimes \ \mathbf{s} \ \otimes \dots \otimes \mathbbm{1}  $ stands for the natural lift of 
the spin vectors $  \mathbf{s}  = (s_x, s_y, s_z ) $ to the $ n $-th component of the tensor product. On each copy of $ \mathbb{C}^2 $, the spin vector $\mathbf{s}$ coincides with the three generators  of $ SU(2) $:
\begin{equation*}
s_x = \frac{1}{2} \left( \begin{matrix} 0 & 1 \\ 1 & 0 \end{matrix}\right) , \quad s_y = \frac{1}{2} \left( \begin{matrix} 0 & -i \\ i & 0 \end{matrix}\right) , \quad s_z = \frac{1}{2} \left( \begin{matrix} 1 & 0 \\ 0 & -1 \end{matrix}\right) . 
\end{equation*}
Then, the  total spin-vector is defined as
$$	 \mathbf{S} = \sum_{n=1}^N  \mathbf{S}(n). $$
If $V = \textrm{P}$ is a polynomial, the  mean-field Hamiltonian takes the form
\begin{equation}\label{eq:Ham}
	H_P = N \ \textrm{P}\Big(\tfrac{2}{N}   \mathbf{S} \Big),
\end{equation}
 where the non-commuting polynomial $  \textrm{P}\Big(\tfrac{2}{N}   \mathbf{S} \Big) $ is understood as the Weyl-ordered representant of $P$. More precisely,  $  \textrm{P}\Big(\tfrac{2}{N}   \mathbf{S} \Big) $ is a finite linear combination of products of the  three rescaled components of the total spin-vector $\mathbf{S}$ in which each product is averaged under the three components so that  it becomes a self-adjoint operator. For instance, if $ P( \mathbf{m}) = m_x m_y $ we obtain  $ \textrm{P}\big(2   \mathbf{S} /N \big) =2 \big(  \mathbf{S}_x  \mathbf{S}_y +  \mathbf{S}_y  \mathbf{S}_x \big) / N^2.$   The Weyl-ordering guarantees that the associated mean-field Hamiltonian
is always self-adjoint on $ \mathcal{H}_N $. The dependence on the particle number is two-fold and quintessential for the mean-field nature. 
The scaling of the spin ensures that the operator is norm-bounded by one, $ \left\| \tfrac{2}{N}  S_\xi \right\| \leq 1 $ for all $ \xi= x,y, z $. Moreover, the prefactor $ N $ forces the energy $ H $  to be extensive.  

Let us illustrate the richness of this class of Hamiltonians by discussion a few examples. The simplest case constitute the polynomials $P( \mathbf{m}) = m_z^p$ which correspond to the classical and commutative p-spin Curie-Weiss models and $p=2$ is of course the standard Curie-Weiss (CW) model. A very popular non-commutative model is the Quantum Curie-Weiss (QCW) model which corresponds to the choice $P( \mathbf{m}) = m_z^2 + \gamma m_x$, where $\gamma \geq 0$ can be interpreted as the strength of an external transverse field. A particularly interesting feature of the QCW is that it allows for a probabilistic Poisson path integral representation which links its thermodynamics to percolation type problems. However, this beautiful approach yields rather weak results compared to a spectral analysis of the Hamiltonian \cite{Bjornberg:2020aa, Chayes:2008aa}. More generally, the well-studied Lipkin-Meshkov-Glick  model is given by
$ P( \mathbf{m}) = -\alpha m_y^2 - \beta m_z^2 -\gamma m_x $ with $ \alpha,\beta , \gamma \in \mathbb{R} $.

The definition of the random external field is much less involved. Indeed, given an integrable vector valued random variable $\mathfrak{b}$, we consider an i.i.d. copies $\mathbf{b}(1), \ldots, \mathbf{b}(N)$ of $\mathfrak{b}$ and define
\begin{equation}\label{eq:rf}
	H_{\mathfrak{b}} := 2 \sum_{n = 1}^{N} \langle \mathbf{b}(n),  \mathbf{S}(n) \rangle := 2 \sum_{n = 1}^{N} b_{x}(n) S_{x}(n) + b_{y}(n) S_{y}(n) + b_{z}(n) S_{z}(n), 
\end{equation}
where the factor 2 is introduced to simplify a few expressions in the forthcoming discussion. Of course, more precisely the Hamiltonian(s) 	$H_{\mathfrak{b}}$, and thus also the total Hamiltonians $H_{V, \mathfrak{b}}$, form a family of matrix-valued random variables $H_{\mathfrak{b}}(\omega)$ on some common probability space $(\Omega, \mathcal{F}, \mathbb{P})$. As the underlying probability space will not be of particular importance, this detail will be ignored for the most part in the following.  

Having introduced the random field mean-field Hamiltonians, we recall some basic quantities of statistical physics. The (random) partition function is given by 
\begin{equation}\label{eq:Z}
	Z_N(V,\mathfrak{b}) := \tr e^{H_{V, \mathfrak{b}}}
\end{equation}
and the specific pressure or free energy is
\begin{equation}
	p_N(V,\mathfrak{b}) := \frac1N \log Z_N(V,\mathfrak{b}).
\end{equation}
The canonical definition of the partition further contains  the inverse temperature $\beta \geq 0$ in the exponent and the dependence of the pressure on $\beta$  is crucial for an understanding of phase transitions in a given model. Since in our case $\beta H_{V, \mathfrak{b}} = H_{\beta V, \beta \mathfrak{b}}$, we choose to include $\beta$ in the parameters of the model.

Our main result is a variational characterization of the limit of $\mathbb{E}[p_N(V,\mathfrak{b})]$ where $\mathbb{E}$ denotes here and in the following the disorder average, i.e., the average with respect to the random variables $b_1, \ldots, b_N$. To formulate our theorem, we need to introduce the two real valued functions $\Lambda_\mathfrak{b}, \Lambda_\mathfrak{b}^{*} : \rr^3 \to \rr$, where $\Lambda_\mathfrak{b}$ is defined as
\begin{equation}\label{eq:Lambda}
	\Lambda_\mathfrak{b}(\mathbf{h}) = \mathbb{E}\left[\log 2 \cosh \left(  \sqrt{(h_x+\mathfrak{b}_x)^2 + (h_y+\mathfrak{b}_y)^2 + (h_z+\mathfrak{b}_z)^2}   \right) \right]
\end{equation}
and $\Lambda_\mathfrak{b}^{*}$ is its Legendre transform
\begin{equation}\label{eq:Lambda*}
	\Lambda^{*}_\mathfrak{b}(\mathbf{m})= \sup_{\mathbf{h} \in \rr^3} \bigg( \langle \mathbf{m}, \mathbf{h} \rangle - \Lambda_\mathfrak{b}(\mathbf{h}) \bigg),
\end{equation}
where $\langle \cdot, \cdot \rangle$ denotes the standard Euclidean scalar product. The reader may have realized that $\Lambda_\mathfrak{b}(\mathbf{h})$ coincides with the $N$-independent expected value of the specific  pressure of a linear field Hamiltonian $ H = \sum_{n = 1}^{N} 2(b_x(n) + h_x) S_{x}(n) + 2(b_y(n) + h_y) S_y(n) + 2(b_z(n) + h_z) S_z(n)$. The negative of the Legendre transform  will be identified later as restricted free energy and is the non-commutative substitute of classical large deviation rate functions. Our main theorem reads as follows.

\begin{theorem}\label{thm:main}
Let $V : B_{\rr^3} \to \rr$ be a continuous function on the three-dimensional unit ball, $\mathfrak{b}$ an integrable (i.e. an $L^1$-)vector-valued random variable and $ H_{V, \mathfrak{b}}$ the corresponding family of self-adjoint random Hamiltonians defined on the $N$-particle Hilbert spaces $ \mathcal{H}_N = \bigotimes_{n=1}^N \mathbb{C}^2. $ Then, the quenched expectations $\mathbb{E}[p_N(V,\mathfrak{b})]$ converge and the limit is given by
\begin{equation}\label{eq:main}
	\lim_{N \to \infty} \mathbb{E}[p_N(V,\mathfrak{b})] = p(V,\mathfrak{b}) := \sup_{\mathbf{m} \in B_{\rr^3}} \bigg(V(\mathbf{m}) - \Lambda^{*}_\mathfrak{b}(\mathbf{m}) \bigg).
\end{equation}
In fact, if the random variables $p_N(V,\mathfrak{b})$ are defined on a common probability space, then they converge almost surely and in $L^1$ to the deterministic value $p(V,\mathfrak{b})$.
\end{theorem}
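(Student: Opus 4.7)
The plan is the two-ingredient strategy described in the introduction: a non-commutative Varadhan's lemma built from a microcanonical free-energy identification and fluctuation control for the magnetization $\hat{\mathbf{m}} := \tfrac{2}{N}\mathbf{S}$. As a preliminary step, both sides of \eqref{eq:main} are Lipschitz in $V$ with respect to the sup-norm on $B_{\rr^3}$ (for the left-hand side via $\|H_V - H_{V'}\|/N \leq \|V - V'\|_\infty + O(1/N)$ from the Weyl ordering of polynomial symbols), so by Stone--Weierstrass it suffices to treat $V = P$ polynomial.

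The central object is the microcanonical free energy at fixed magnetization,
\[
f(\mathbf{m}) := \lim_{\delta \downarrow 0}\lim_{N \to \infty}\tfrac{1}{N}\, \mathbb{E}\, \log \tr\bigl[\chi_\delta(\hat{\mathbf{m}} - \mathbf{m}) \, e^{H_{\mathfrak{b}}}\bigr],
\]
with $\chi_\delta$ a smooth bump near zero; the key identification to establish is $f(\mathbf{m}) = -\Lambda^*_{\mathfrak{b}}(\mathbf{m})$. Writing $\chi_\delta$ (or a suitable majorant/minorant) as a Gaussian superposition reduces this to computing the pressure of the \emph{quadratic} model
\[
Q_N(\mathbf{h},t;\mathbf{m}) := \tfrac{1}{N}\,\mathbb{E}\,\log\tr\exp\bigl(H_{\mathfrak{b}} + 2\langle\mathbf{h},\mathbf{S}\rangle - tN|\hat{\mathbf{m}} - \mathbf{m}|^2\bigr).
\]
The quadratic piece $-tN|\hat{\mathbf{m}}|^2 = -4t|\mathbf{S}|^2/N$ is proportional to the $SU(2)$ Casimir, and the new input of the paper is precisely that such quadratic models can still be handled at leading order in $N$: either by a Hubbard--Stratonovich-type linearization (introducing an auxiliary classical field that converts the quadratic form into an effective linear Hamiltonian) or, equivalently, by the semiclassical bound $\langle|\hat{\mathbf{m}}|^2\rangle = |\langle\hat{\mathbf{m}}\rangle|^2 + O(1/N)$ on low-entanglement (in particular product) states. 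Either route reduces the problem to the already-understood linear-field pressure $\Lambda_{\mathfrak{b}}$, and letting $t \to \infty$ while optimizing over $\mathbf{h}$ implements a Legendre transform producing $-\Lambda^*_{\mathfrak{b}}(\mathbf{m})$.

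For the upper bound $\limsup \mathbb{E}\, p_N \leq \sup_{\mathbf{m}}[V(\mathbf{m}) - \Lambda^*_{\mathfrak{b}}(\mathbf{m})]$, I use a smooth partition of unity $\mathbbm{1} = \sum_\alpha \chi_\alpha(\hat{\mathbf{m}})$ supported on a $\delta$-net $\{\mathbf{m}_\alpha\} \subset B_{\rr^3}$, together with the semiclassical operator estimate $\chi_\alpha(\hat{\mathbf{m}})V(\hat{\mathbf{m}}) = V(\mathbf{m}_\alpha)\chi_\alpha(\hat{\mathbf{m}}) + O(\delta + N^{-1})$ (components of $\hat{\mathbf{m}}$ commute up to $O(1/N)$), combined with the microcanonical bound above. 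For the matching lower bound I take the product trial state $\bigotimes_n \rho_n$, where $\rho_n$ is the single-site Gibbs state for $2\langle\mathbf{h}^*+\mathbf{b}(n),\mathbf{s}\rangle$ with $\mathbf{h}^*$ the Legendre-dual field of an optimizer $\mathbf{m}^*$ of $V - \Lambda^*_{\mathfrak{b}}$, and apply the Gibbs variational principle; a direct computation, using $\mathbb{E}[\nabla\Lambda_{\mathfrak{b}}(\mathbf{h}^*)] = \mathbf{m}^*$, returns exactly $V(\mathbf{m}^*) - \Lambda^*_{\mathfrak{b}}(\mathbf{m}^*)$. The Talagrand--Shcherbina--Tirozzi perturbation trick (add $\epsilon|\hat{\mathbf{m}} - \mathbf{m}|^2$ to the Hamiltonian, differentiate the convex perturbed pressure in $\epsilon$, send $\epsilon \to 0$ through points of differentiability) supplies the Gibbs-state concentration of $\hat{\mathbf{m}}$ on the maximizing set that is needed to glue the microcanonical identification to the full pressure.

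Almost-sure and $L^1$ convergence of $p_N$ to the deterministic limit will follow from concentration of measure: the map $(\mathbf{b}(1),\ldots,\mathbf{b}(N)) \mapsto p_N(V,\mathfrak{b})$ is $O(1/\sqrt{N})$-Lipschitz in the product $\ell^2$ norm by a standard log-trace perturbation estimate, so a bounded-differences inequality plus Borel--Cantelli yields $p_N - \mathbb{E}\, p_N \to 0$ almost surely, with $L^1$ convergence coming from uniform boundedness. The step I expect to be the main obstacle is the microcanonical identification $f(\mathbf{m}) = -\Lambda^*_{\mathfrak{b}}(\mathbf{m})$: because the three components of $\hat{\mathbf{m}}$ do not commute, "fixing the magnetization" cannot be implemented by a sharp spectral projection, and the quadratic confinement must be controlled so as to get the exponential rate exactly right, uniformly in the $SU(2)$-irreducible block decomposition, while interchanging the limits $N \to \infty$, $t \to \infty$, and $\delta \downarrow 0$. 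This is the genuinely new "non-commutative Varadhan" ingredient and the technical heart of the argument.
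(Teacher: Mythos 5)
Your proposal follows essentially the same two-ingredient strategy as the paper: a Weierstrass reduction to polynomial $V$, a product trial state via the Gibbs variational principle for the lower bound, the Shcherbina--Tirozzi-type random Gaussian perturbation for controlling thermal fluctuations of $\mathbf{M}$, and a microcanonical comparison to a negative quadratic confinement $-\alpha N|\mathbf{M}-\mathbf{m}|^2$. The execution differs in a few places. For the microcanonical upper bound the paper does not construct a smooth partition of unity in the non-commuting observables $\mathbf{M}$ (which would be delicate to define and control), but instead proves the pointwise operator inequality $P(\mathbf{M}) \leq (P(\mathbf{m}) + g_P(\alpha))\mathbbm{1} + \alpha\left((\mathbf{M}_x-\mathbf{m}_x)^2 + (\mathbf{M}_y-\mathbf{m}_y)^2 + (\mathbf{M}_z-\mathbf{m}_z)^2\right)$ directly. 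For the quadratic model the paper uses neither Hubbard--Stratonovich nor Trotterization but the elementary operator comparison $H_{P_{\pmb{\alpha}},\mathfrak{b}} \leq L(\pmb{\alpha},\mathbf{m},\{\mathbf{b}(n)\})$ obtained by adding a positive definite operator, made sharp via the Bogoliubov inequality and the fluctuation bound; this sidesteps the non-commutative error terms your Hubbard--Stratonovich route would incur. The final identification with $-\Lambda^*_{\mathfrak{b}}$ uses Toland--Singer duality on the resulting $\inf$-form, rather than a $t\to\infty$ limit.

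There is however a concrete gap in your treatment of the a.s.\ and $L^1$ convergence. You propose to deduce these from a bounded-differences or Lipschitz concentration inequality on $(\mathbf{b}(1),\ldots,\mathbf{b}(N))\mapsto p_N$. But $\mathfrak{b}$ is only assumed $L^1$-integrable, so the increments are unbounded and the fields need not be sub-Gaussian: neither McDiarmid nor log-Sobolev-type concentration is available, and the pressure may well have unbounded (or even infinite) variance, as the paper explicitly remarks. Likewise, $p_N$ is not uniformly bounded, so $L^1$ convergence does not follow from a.s.\ convergence by boundedness. The paper instead gets a.s.\ convergence for free because both the lower and upper variational bounds hold almost surely pointwise (via the strong law of large numbers for $\frac1N\sum_n|\mathbf{b}(n)|$ and for the empirical linear-field pressures), with no separate concentration step; and it gets $L^1$ convergence by a truncation argument on the event $\{\sum_n|\mathbf{b}(n)| > 2N\,\mathbb{E}|\mathfrak{b}|\}$, using the a priori bound $|p_N(\textrm{P},\mathfrak{b})| \leq \ln 2 + C + \frac1N\sum_n|\mathbf{b}(n)|$ and dominated convergence. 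Your approach would need to be replaced by something along these lines, or else you would need to strengthen the integrability hypothesis on $\mathfrak{b}$.
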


The proof of Theorem~\ref{thm:main} is spelled out in Section~\ref{sec:proof}. Even though the definition of $H_V$ for general continuous functions is most conveniently formulated in terms of Bloch coherent states (see Section~\ref{sec:model}), the proof for polynomial models is self-contained and does not rely on any tools from semiclassical analysis. We close the introduction with a few remarks on Theorem~\ref{thm:main}.
\begin{enumerate}
	\item We first note that the integrability of $\mathfrak{b}$ ensures that $	\Lambda_\mathfrak{b}(\mathbf{h})$ exists for every $\mathbf{h} \in \mathbb{R}^3$ and we have the bounds $ | \mathbb{E} \mathfrak{b} + \mathbf{h}| \leq \Lambda_\mathfrak{b}(\mathbf{h}) \leq \Lambda_\mathfrak{b}(\mathbf{0}) + |\mathbf{h}|$, which will be proved below as part of Lemma~\ref{lem:lambda}. These inequalities ensure that the Legendre transform $	\Lambda^{*}_\mathfrak{b}(\mathbf{m}) < \infty$ for all $\mathbf{m} \in B_{\rr^3}$. As a convex function, $\Lambda^{*}_\mathfrak{b}(\mathbf{m})$ is continuous and hence there exists a maximizer $\mathbf{m}$ for the free energy \eqref{eq:main}.
	\item In the special case $\mathfrak{b} = 0$ almost surely, the Hamiltonian reduces to the deterministic mean-field part. We demonstrate now that \eqref{eq:main} is consistent with the known expression in this case. First, $\Lambda_0(\mathbf{h}) = \log 2\cosh(|\mathbf{h}|)$ is radial. So its Legendre transform is also radial and an elementary computation shows
	\begin{equation}
		\Lambda_0^{*}(\mathbf{m}) =  \log \frac{1+|\mathbf{m}|}{2} + \log \frac{1-|\mathbf{m}|}{2} =: - I(|\mathbf{m}|)
	\end{equation}
	with the binary entropy $I \colon [0,1] \to \rr$. Hence, we obtain 
	\begin{equation}\label{eq:maindetermin}
		p_{V,0} := \sup_{\mathbf{m} \in B_{\rr^3}} \bigg( V(\mathbf{m} + I(|\mathbf{m}|) \bigg) = \max_{r \in [0,1]} \left\{ I(r) +  \max_{\Omega \in S^2}   V(r\mathbf{e}(\Omega) )  \right\} 
	\end{equation}
	where $\Omega =(\theta, \varphi) $ is the spherical parametrization of the  unit sphere $ S^2 $ with $ 0\leq \theta \leq \pi $,\; $ 0 \leq \varphi \leq 2 \pi $ and $\mathbf{e}(\Omega)$ is the corresponding unit vector. This is a known variational formula for $p_{V,0}$ (see e.g. \cite{MW23}).
	
	\item The proper setting for almost sure and $L^1$-convergence is a follows. Let $\mathbf{b}(1), \mathbf{b}(2), \ldots$ be an infinite sequence of i.i.d. random variables on a probability space $(\Omega, \mathcal{F}, \mathbb{P})$ where each $\mathbf{b}(i)$ shares the distribution of $\mathfrak{b}$. Now, $(H_{V,\mathfrak{b}}(\omega))$ are matrix-valued random variables and $p_N(V,\mathfrak{b})$ are real-valued random variables on  $(\Omega, \mathcal{F}, \mathbb{P})$. Note that the random fields $\mathbf{b}(i)$ are reused for different particle numbers $N$ and hence the pressures $p_N(V,\mathfrak{b})$ are not independent from each other (however one would still obtain the same results if we considered independent $p_N(V,\mathfrak{b})$). Theorem~\ref{thm:main} says that these $p_N(V,\mathfrak{b})$ converge almost surely and in $L^1$ to the constant $p(V,\mathfrak{b})$. Note that the almost sure convergence is not trivial as the pressure may have an unbounded variance.
	
	\item In disordered systems, the annealed free energy,
	\begin{equation}
		p_N^{\mathrm{ann}}(V,\mathfrak{b}) := \frac1N \log \left( \mathbb{E}[ Z_N(V,\mathfrak{b})] \right),
	\end{equation}
	is often considered as simplification for the more physical quenched free energy. In random field models, annealed and quenched averages disagree (even for very high temperatures). If $\mathbb{E}[\exp(|\mathfrak{b}_{x,y,z}|)] < \infty$ (otherwise $p_N^{\mathrm{ann}}(V,\mathfrak{b}) = \infty$ for all $N$ ), we have a similar variational formula 
	\begin{equation}
		\lim_{N \to \infty} p_N^{\mathrm{ann}}(V,\mathfrak{b}) = \sup_{\mathbf{m} \in B_{\rr^3}} \left(V(\mathbf{m}) - (\Lambda^{\mathrm{ann}}_\mathfrak{b})^{*}(\mathbf{m}) \right),
	\end{equation}
	where $(\Lambda^{\mathrm{ann}}_\mathfrak{b})^{*}$ is the Legendre transform of the annealed constant field pressure
	\begin{equation}
		\Lambda^{\mathrm{ann}}_\mathfrak{b}(\mathbf{h}) = \log \left( 2 \,  \mathbb{E} \left[ \cosh \left( \sqrt{(h_x+\mathfrak{b}_x)^2 + (h_y+\mathfrak{b}_y)^2 + (h_z+\mathfrak{b}_z)^2}  \right)  \right] \right)
	\end{equation}
\end{enumerate}

\section{Coherent States and the Hamiltonian for general $V$}\label{sec:model}

To give a precise definition of $H_V$ for general continuous functions $V$, we need to recall some further facts about polynomial Hamiltonians $H_P$ and some tools from semiclassical analysis. We first note that $ H_P $ in~\eqref{eq:Ham} is a function of the total spin $  \mathbf{S}. $ Thus, $ H_P $ is block diagonal with respect to the decomposition of the  tensor-product Hilbert space $\mathcal{H}_N$ according to the irreducible representations of the total spin corresponding to the eigenspaces of  $ \mathbf{S}^2 $ with eigenvalues $ J(J+1) $, i.e.
\begin{equation}\label{def:blockH}
\mathcal{H}_N \equiv  \bigoplus_{J=\frac{N}{2} - \lfloor \frac{N}{2}\rfloor}^{N/2} \bigoplus_{\alpha=1}^{M_{N,J}}  \; \mathbb{C}^{2J+1} , \qquad M_{N,J}  = \frac{2J+1}{N+1} \binom{N+1}{\frac{N}{2} +J + 1 } . 
\end{equation}
The total spin $ J $ of $ N $ qubits can take any value from $ N/2 $ down in integers to either $ 1/2 $ if $ N $ is odd or $ 0 $ if $ N $ is even. The degeneracy of the representation of spin $ J $ in this decomposition is $ M_{N,J}  $ \cite{Mihailov:1977aa}. 
On each block $ (J, \alpha) $, the Hamiltonian~\eqref{eq:Ham} then acts as the given polynomial of the generators of the irreducible representation of $ SU(2) $ on $ \mathbb{C}^{2J+1} $.

It is well known that for large spin quantum numbers $J$ the spin operators behave in leading order as classical vector and the rigorous formulation of semiclassical analysis is most conveniently carried out via Bloch coherent states on the Hilbert space $  \mathbb{C}^{2J+1}$ \cite{Lieb73,MN04,Biskup:2007aa}.  They are parametrized by an angle $\Omega =(\theta, \varphi) $ on the unit sphere $ S^2 $ with $ 0\leq \theta \leq \pi $,\; $ 0 \leq \varphi \leq 2 \pi $. In bra-ket-notation, which we will use in this paper, the Bloch-coherent states are given by
\begin{equation}\label{def:cs}
\big| \Omega, J \rangle := U(\theta,\varphi) \  \big| J \rangle , \qquad U(\varphi,\theta) :=  \exp\left( \frac{\theta}{2} \left( e^{i\varphi} S_- - e^{-i\varphi} S_+ \right) \right)  
\end{equation} 
The reference vector  $  \big| J \rangle \in  \mathbb{C}^{2J+1}  $ is the normalized eigenstate of the operator $S_z$ corresponding to the (maximal) eigenvalue $ J $ on the Hilbert space $ \mathbb{C}^{2J+1} $. The operators $ S_\pm = S_x \pm i S_y $ are the spin raising and lowering operators of the irreducible representation of $ SU(2) $ on $  \mathbb{C}^{2J+1} $. 

We recall some properties of Bloch coherent states. First and foremost, they form an overcomplete set of vectors as expressed through the resolution of unity on  $  \mathbb{C}^{2J+1} $:
\begin{equation}\label{eq:completeness}
 \frac{2J+1}{4\pi} \int \big| \Omega, J \rangle \langle \Omega, J \big|  \ d\Omega  = \mathbbm{1}_{  \mathbb{C}^{2J+1}} .
\end{equation}
Every linear operator $  G $ on $  \mathbb{C}^{2J+1} $ is associated with a lower and upper symbol. The lower symbol is  $ G(\Omega,J) :=  \langle \Omega , J \big|  G  \big| \Omega, J \rangle $, and the upper symbol is characterized  through the property
\begin{equation}
 G = \frac{2J+1}{4\pi} \int  g(\Omega,J) \,  \big| \Omega, J \rangle \langle \Omega, J \big| \, d\Omega  .
 \end{equation}
The choice of $ g $ is not unique. E.g.\ through an explicit expression~\cite{Kutzner:1973rz}, one sees that there is always an arbitrarily often differentiable choice,  $ g(\cdot, J) \in C^\infty(S^2)  $.  
More properties of coherent states can be found in~\cite{Per86,CR12,Zwo12}.
The lower and upper symbol feature prominently in Berezin and Lieb's semiclassical bounds~\cite{Ber72,Lieb73,Simon:1980} on the partition function associated with a self-adjoint Hamiltonian $ G $ on $ \mathbb{C}^{2J+1} $:
\begin{equation}\label{eq:BerezinLieb}
 \frac{2J+1}{4\pi} \int e^{ G(\Omega,J)} d\Omega \leq  \tr_{\mathbb{C}^{2J+1}}  e^{G } \leq \frac{2J+1}{4\pi} \int e^{g(\Omega,J)} d\Omega  . 
 \end{equation}
 For many physical models,  the lower and upper symbol asymptotically coincide for large spin quantum numbers $ J $ and, hence, the bound \eqref{eq:BerezinLieb} becomes sharp \cite{Lieb73,Simon:1980,Duffield:1990aa}. 
Indeed, for any polynomial of the spin operator as in~\eqref{eq:Ham} restricted to $  \mathbb{C}^{2J+1} $ both the upper and lower symbols agree to leading order in $ N $ with the corresponding classical polynomial function on the unit ball $ B_{\rr^3}. $ In spherical coordinates $ \mathbf{e}(\Omega) = \left(\sin\theta \cos\varphi ,  \sin\theta \sin\varphi , \cos\theta \right) \in S^2 $, we have 
\begin{equation}\label{eq:uppersymbclasss}
 \sup_{0 \leq J \leq N/2} \left\|  \textrm{P}\Big(\tfrac{2}{N}  \mathbf{S} \Big)\Big|_{  \mathbb{C}^{2J+1}} -  \frac{2J+1}{4\pi} \int    \textrm{P}\Big(\tfrac{2J}{N} \mathbf{e}(\Omega)  \Big)  \big| \Omega, J \rangle \langle \Omega, J \big| \, d\Omega  \right\| \leq \mathcal{O}(N^{-1}) 
\end{equation}
for the operator norm $ \| \cdot \| $ on $  \mathbb{C}^{2J+1} $. Here, we used the Landau $ \mathcal{O} $-notation, i.e., the error on the right is bounded by $ C N^{-1} $ with a constant $ C $ which only depends on the coefficients of the polynomial. This statement is a quantitative version of Duffield's theorem~\cite{Duffield:1990aa} and was first established in this form in \cite{MW23}.  There it has also been shown that an upper bound symbol as \eqref{eq:uppersymbclasss} also implies 
that the lower symbol shares the same classical asymptotics, i.e. for the polynomial case
$$  \sup_{0 \leq J \leq N/2} \sup_{\Omega} \left| \langle \Omega , J \big|   \textrm{P}\big(\tfrac{2}{N}  \mathbf{S} \big)\Big|_{  \mathbb{C}^{2J+1}}  \big| \Omega, J \rangle -    \textrm{P}\Big(\tfrac{2J}{N} \mathbf{e}(\Omega)  \Big)  \right| \leq  \mathcal{O}(N^{-1})  .
$$
In view of the Berezin-Lieb inequalities~\eqref{eq:BerezinLieb}, the Duffield-type bound \eqref{eq:uppersymbclasss} captures  the thermodynamic behavior of the model and this observation motivates the following assumption for more general mean-field type Hamiltonians $H_V$.  

\begin{assumption}
	Let $V : B_{\rr^3} \to \rr$ be a continuous function. We assume that 
	$ H_V $ is block diagonal with respect to the orthogonal decomposition \eqref{def:blockH} of $ \mathcal{H}_N $,
	\begin{equation}\label{eq:Hblock}
		H =  \bigoplus_{J=\frac{N}{2} - \lfloor \frac{N}{2}\rfloor}^{N/2} \bigoplus_{\alpha=1}^{M_J}  H_{J,\alpha} 
	\end{equation}
	with self-adjoint blocks $  H_{J,\alpha}  $ acting on a copy of $ \mathbbm{C}^{2J+1} $. 
	Moreover, 
	all block Hamiltonians  are uniformly approximable by $V$ as an upper symbol in operator norm on $ \mathbb{C}^{2J+1} $ to order $N$ as $ N \to \infty $:
	\begin{equation}\label{ass:symbol}
		\max_{J,\alpha}  \left\| H_{J,\alpha}  -    \frac{2J+1}{4\pi} \int  N V\Big(\frac{2J}{N}  \mathbf{e}(\Omega) \Big) \,  \big| \Omega, J \rangle \langle \Omega, J \big| \, d\Omega  \right\| = o(N) ,
	\end{equation}
	where the maximum runs over $ \alpha  \in \{ 1, \dots , M_{N,J} \}  $ and $ J \in \{ \frac{N}{2} - \lfloor \frac{N}{2}\rfloor, \dots , N/2\} $.
\end{assumption}

Note that strictly speaking, the symbol $V$ does not uniquely characterize the deterministic sequence of operators $H_V$, but rather determines a class of operators whose asymptotic behavior to be approximable by $V$ well enough.  We further note that the degeneracy numbers $M_{N,J}$ agree on exponential order with the binary entropy $I(r)$ evaluated at $r = N/(2J)$ and in combination with the Berezin-Lieb inequalities these readily yields (see also \cite{MW23})

\begin{equation*}
p_V := \lim_{N\to \infty} N^{-1} \ln \tr \exp\left( H_V \right) = \max_{r \in [0,1]} \left\{ I(r)  \max_{\Omega \in S^2}   V\left(r\mathbf{e}(\Omega) \right)  \right\},
\end{equation*}
which is of course exactly \eqref{eq:maindetermin}.  We emphasize that this elegant approach cannot be extended to the disordered model as it crucially depends on the block diagonal structure and the approximability by single coherent states and both properties are not shared by the random field model.

\section{Proof of Theorem~\ref{thm:main} via a Non-Commutative Microcanonical Approach}\label{sec:proof}

We will prove Theorem~\ref{thm:main} for polynomial models by establishing an asymptotically sharp lower bound in Section~\ref{sec:lower} and the corresponding upper bound in Section~\ref{sec:upper}. The latter requires some preparations, most notably an apriori control of thermal fluctuations which is the content of Section~\ref{sec:therm}. The extension to more general continuous symbols $V$ is carried out in Section~\ref{sec:con}.

\subsection{Sharp Lower Bound by Gibbs Variational Principle}\label{sec:lower}

As usual, obtaining sharp lower bounds for the pressure is the easier part, as by Gibbs variational principle we only need to find a good trial state. Indeed, recall that Gibbs' principle asserts that for any self-adjoint operator $H$ on a finite dimensional Hilbert space $\mathcal{H}$ we have
\begin{equation}\label{eq:Gibbs}
	\log \tr e^{H} = \sup_{\varrho \in \mathcal{D}_{\mathcal{H}}} \left( \tr[\varrho H] - \tr[ \varrho \log \varrho] \right),
\end{equation}
where $\mathcal{D}_{\mathcal{H}}$ denotes the set of density matrices
	\begin{equation}
		\mathcal{D}_{\mathcal{H}} := \{ \varrho \in \mathcal{B}{\mathcal{H}} \, | \, \varrho \geq 0, \, \tr \varrho = 1 \}
	\end{equation}
and $\log \varrho$ is defined via the spectral calculus for a positive semidefinite $\rho$. We pick (random) trial states $\varrho_{\mathfrak{b}}(\mathbf{h})$ on $\mathcal{H}_N$ of the form 
\begin{equation}\label{eq:trial}
	\varrho_{\mathfrak{b}}(\mathbf{h}) := \frac{e^{H_{\mathfrak{b}} + 2 \langle \mathbf{h}, \mathbf{S} \rangle }}{\tr e^{H_{\mathfrak{b}} + 2 \langle \mathbf{h}, \mathbf{S} \rangle }} = \bigotimes_{n=1}^{N} \frac{e^{ 2 \langle \mathbf{h} + \mathbf{b}(n) , \mathbf{S}(n) \rangle }}{\tr e^{ 2 \langle \mathbf{h} + \mathbf{b}(n) , \mathbf{S}(n) \rangle }} := \bigotimes_{n=1}^{N} \varrho_{\mathfrak{b}}^{(n)}(\mathbf{h}),
\end{equation}
where the product identity follows from the tensor product structure of linear fields. Using the states from \eqref{eq:trial}, we want to establish the following
\begin{proposition}\label{prop:lower}
	Let $\textrm{P} \colon \rr^3 \to \rr $ be a polynomial, $\mathfrak{b}$ an integrable vector valued random variable and $H_{\textrm{P},\mathfrak{b}}$ the associated random-field Hamiltonian(s) with specific pressures $p_N(\textrm{P},\mathfrak{b})$. Then, almost surely
	\begin{equation}\label{eq:lower}
	  \sup_{\mathbf{m} \in B_{\rr^3}} \left(\textrm{P}(\mathbf{m}) - \Lambda^{*}_\mathfrak{b}(\mathbf{m}) \right) \leq 	\liminf_{N \to \infty} p_N(\textrm{P},\mathfrak{b})
	\end{equation}
\end{proposition}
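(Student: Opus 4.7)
The plan is to apply the Gibbs variational principle \eqref{eq:Gibbs} to $H_{\textrm{P},\mathfrak{b}}$ with the product trial state $\varrho_{\mathfrak{b}}(\mathbf{h})$ from \eqref{eq:trial}, to identify the $N \to \infty$ asymptotics of each of the resulting contributions via the strong law of large numbers, and finally to optimize over the free parameter $\mathbf{h} \in \rr^3$ using Legendre duality. The substitution exploits that $\varrho_{\mathfrak{b}}(\mathbf{h})$ is itself the Gibbs state of the linear Hamiltonian $H_{\mathfrak{b}} + 2\langle \mathbf{h}, \mathbf{S}\rangle$: plugging $\log \varrho_{\mathfrak{b}}(\mathbf{h}) = H_{\mathfrak{b}} + 2\langle \mathbf{h}, \mathbf{S}\rangle - \log \tr e^{H_{\mathfrak{b}} + 2\langle \mathbf{h}, \mathbf{S}\rangle}$ into \eqref{eq:Gibbs} cancels $\tr[\varrho_{\mathfrak{b}}(\mathbf{h}) H_{\mathfrak{b}}]$ against part of the entropy and leaves the clean trial bound
\begin{equation*}
\log Z_N(\textrm{P}, \mathfrak{b}) \geq \tr[\varrho_{\mathfrak{b}}(\mathbf{h}) H_{\textrm{P}}] - 2\langle \mathbf{h}, \tr[\varrho_{\mathfrak{b}}(\mathbf{h}) \mathbf{S}]\rangle + \log \tr e^{H_{\mathfrak{b}} + 2\langle \mathbf{h}, \mathbf{S}\rangle}.
\end{equation*}

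The last term factorizes over sites as $\sum_{n=1}^N \log 2\cosh(|\mathbf{h} + \mathbf{b}(n)|)$ (linear fields on different qubits commute), so $N^{-1}$ times it converges almost surely to $\Lambda_{\mathfrak{b}}(\mathbf{h})$ by the strong law of large numbers. Similarly, the single-site magnetizations $\mathbf{m}^{(n)}(\mathbf{h}) := 2\tr[\varrho_{\mathfrak{b}}^{(n)}(\mathbf{h}) \mathbf{S}(n)] = \tanh(|\mathbf{h} + \mathbf{b}(n)|)\,(\mathbf{h} + \mathbf{b}(n))/|\mathbf{h} + \mathbf{b}(n)|$ are i.i.d.\ with values in the closed unit ball, and $N^{-1}\sum_n \mathbf{m}^{(n)}(\mathbf{h})$ tends almost surely to $\mathbf{m}(\mathbf{h}) := \mathbb{E}[\mathbf{m}^{(n)}(\mathbf{h})] = \nabla \Lambda_{\mathfrak{b}}(\mathbf{h})$. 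The crux of the argument is the polynomial term $\tr[\varrho_{\mathfrak{b}}(\mathbf{h}) \textrm{P}(2\mathbf{S}/N)]$. By linearity it suffices to treat each Weyl-ordered monomial separately: a degree-$k$ monomial in the components of $2\mathbf{S}/N$ expands as $N^{-k} \sum_{n_1, \ldots, n_k} \prod_j 2 S_{\xi_j}(n_j)$ (symmetrized over the $k!$ orderings). Because $\varrho_{\mathfrak{b}}(\mathbf{h})$ is a tensor product, any summand with pairwise distinct $n_1, \ldots, n_k$ factorizes into $\prod_j m^{(n_j)}_{\xi_j}(\mathbf{h})$, while the $O(N^{k-1})$ summands with at least one coincidence contribute $O(N^{-1})$ in total since the individual traces are uniformly bounded by one. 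The surviving bounded symmetric sum over pairwise distinct indices is a U-statistic in the i.i.d.\ vectors $\mathbf{m}^{(n)}(\mathbf{h})$ and converges almost surely to $\prod_j m_{\xi_j}(\mathbf{h})$; the Weyl symmetrization is lost in the limit because every ordering of the monomial yields the same factorized value. Summing over monomials gives $\tr[\varrho_{\mathfrak{b}}(\mathbf{h}) \textrm{P}(2\mathbf{S}/N)] \to \textrm{P}(\mathbf{m}(\mathbf{h}))$ almost surely.

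Collecting the three limits and dividing by $N$ produces, almost surely and for every fixed $\mathbf{h} \in \rr^3$,
\begin{equation*}
\liminf_{N \to \infty} p_N(\textrm{P}, \mathfrak{b}) \geq \textrm{P}(\mathbf{m}(\mathbf{h})) + \Lambda_{\mathfrak{b}}(\mathbf{h}) - \langle \mathbf{h}, \mathbf{m}(\mathbf{h})\rangle = \textrm{P}(\mathbf{m}(\mathbf{h})) - \Lambda^{*}_{\mathfrak{b}}(\mathbf{m}(\mathbf{h})),
\end{equation*}
where the Legendre identity uses $\mathbf{m}(\mathbf{h}) = \nabla \Lambda_{\mathfrak{b}}(\mathbf{h})$ together with the smoothness and convexity of $\Lambda_{\mathfrak{b}}$. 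Running $\mathbf{h}$ over a countable dense subset of $\rr^3$ preserves the almost-sure statement, and the density of $\{\nabla \Lambda_{\mathfrak{b}}(\mathbf{h}) : \mathbf{h} \in \rr^3\}$ in $\mathrm{int}(B_{\rr^3})$, combined with continuity of $\textrm{P}$ and of $\Lambda^{*}_{\mathfrak{b}}$ on the closed ball (Remark~1 after Theorem~\ref{thm:main}), upgrades the bound to the supremum over all $\mathbf{m} \in B_{\rr^3}$. The principal technical obstacle in this scheme is the non-commutative polynomial step: one must control the fluctuations of $\textrm{P}(2\mathbf{S}/N)$ on a product state even though the three components of $\mathbf{S}$ do not commute, and this is precisely where the tensor product structure of $\varrho_{\mathfrak{b}}(\mathbf{h})$ — inherited from the linear nature of the field Hamiltonian — is used decisively.
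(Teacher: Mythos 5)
Your proof is correct and follows essentially the same route as the paper: Gibbs' variational principle with the product trial state $\varrho_{\mathfrak{b}}(\mathbf{h})$, the observation that the trace against a tensor-product state factorizes over pairwise distinct site indices (with coincidences contributing $\mathcal{O}(N^{-1})$), the strong law of large numbers, and Legendre duality via $\mathbf{m}(\mathbf{h}) = \nabla\Lambda_{\mathfrak{b}}(\mathbf{h})$.

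The one genuine (if minor) difference lies in how the polynomial term $\tr[\varrho_{\mathfrak{b}}(\mathbf{h})\,\textrm{P}(\mathbf{M})]$ is handled. You expand each Weyl-ordered monomial directly and argue that the symmetrization over the $k!$ orderings becomes irrelevant once coinciding indices are discarded, since the factorized value does not depend on the ordering. The paper instead first invokes the algebraic identity $\textrm{P}(\mathbf{M}) = \sum_k \alpha_k (\langle \mathbf{w}_k, \mathbf{M}\rangle)^{d_k}$ (Lemma A.6 of \cite{MW23}), rewriting the polynomial as a sum of powers of single linear forms; each $\langle\mathbf{w}_k, \mathbf{M}\rangle$ commutes with itself, so Weyl ordering disappears entirely and Lemma~\ref{lem:lower} can be stated for a single scalar observable. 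Your version is more self-contained (no appeal to the external algebraic lemma) at the cost of tracking the symmetrizer explicitly; the paper's is slightly cleaner and aligns with how the same decomposition is reused later in Lemma~\ref{lem:micro} and in the commutator bounds of Proposition~\ref{prop:fluc}. Your treatment of the final step is in fact a bit more careful than the paper's: you explicitly restrict to a countable dense family of $\mathbf{h}$'s to control the union of null sets, and you could have noted (via Lemma~\ref{lem:lambda}(2)) that $\nabla\Lambda_{\mathfrak{b}}$ is actually surjective onto $\mathring{B}_{\rr^3}$, not merely dense.
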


We have already sketched the general proof strategy, but we need some further technical considerations. We start by collecting some properties of the functional $\Lambda_\mathfrak{b}, \Lambda^{*}_\mathfrak{b}$.
\begin{lemma}\label{lem:lambda}
	Let $\mathfrak{b}$ be an integrable vector-valued random variable.
	\begin{enumerate}
		\item The function $\Lambda_{\mathfrak{b}}(\mathbf{h})$ is 1-Lipschitz, i.e. $| \Lambda_{\mathfrak{b}}(\mathbf{h}) - \Lambda_{\mathfrak{b}}(\mathbf{h}') | \leq |\mathbf{h} - \mathbf{h}'|$ and
		\begin{equation}\label{eq:lambdaineq}
			| \mathbb{E} [\mathfrak{b}] + \mathbf{h}| \leq \Lambda_\mathfrak{b}(\mathbf{h}) \leq \Lambda_\mathfrak{b}(\mathbf{0}) + |\mathbf{h}|
		\end{equation}
		\item The function $\Lambda_{\mathfrak{b}}(\mathbf{h})$ is strictly convex and for each $\mathbf{m} \in \mathring{B}_{\rr^3}$ there exists a unique $\mathbf{h} = \mathbf{h}(\mathbf{m})\in \rr^3$ such that $\mathbf{m} = \nabla \Lambda_{\mathfrak{b}}(\mathbf{h})$. For this vector $\mathbf{h}(\mathbf{m})$, we have $$\Lambda_{\mathfrak{b}}^{*}(\mathbf{m}) = \langle m, \mathbf{h}(\mathbf{m}) \rangle - \Lambda_{\mathfrak{b}}(\mathbf{h}(\mathbf{m})). $$
		\item The Legendre transform $\Lambda_{\mathfrak{b}}^{*}$ is bounded, continuous and convex on the unit ball $B_{\rr^3}$. 
		
	\end{enumerate}
\end{lemma}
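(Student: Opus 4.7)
Write $f(\mathbf{v}):=\log 2\cosh(|\mathbf{v}|)$, so that $\Lambda_\mathfrak{b}(\mathbf{h})=\mathbb{E}[f(\mathbf{h}+\mathfrak{b})]$. For Part 1, the gradient $\nabla f(\mathbf{v})=\tanh(|\mathbf{v}|)\,\mathbf{v}/|\mathbf{v}|$ (extended by $\mathbf{0}$ at the origin) has Euclidean norm $\tanh(|\mathbf{v}|)\leq 1$, so $f$ and hence $\Lambda_\mathfrak{b}$ are $1$-Lipschitz; the upper bound in \eqref{eq:lambdaineq} then follows by comparison with $\mathbf{h}'=\mathbf{0}$. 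For the lower bound, the elementary inequality $\log 2\cosh(x)\geq |x|$ gives $f(\mathbf{v})\geq|\mathbf{v}|$, and Jensen's inequality for the convex function $|\cdot|$ yields $\Lambda_\mathfrak{b}(\mathbf{h})\geq \mathbb{E}|\mathbf{h}+\mathfrak{b}|\geq |\mathbf{h}+\mathbb{E}\mathfrak{b}|$.

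For Part 2, a direct Hessian computation shows that $f$ is strictly convex on $\rr^3$: at $\mathbf{v}\neq\mathbf{0}$ the radial eigenvalue is $\operatorname{sech}^2(|\mathbf{v}|)>0$ and the two tangential eigenvalues equal $\tanh(|\mathbf{v}|)/|\mathbf{v}|>0$, while at $\mathbf{v}=\mathbf{0}$ the Taylor expansion $f(\mathbf{v})=\log 2+|\mathbf{v}|^2/2+O(|\mathbf{v}|^4)$ shows the Hessian is the identity. Strict convexity of $\Lambda_\mathfrak{b}$ is inherited through the expectation. For $\mathbf{m}\in\mathring{B}_{\rr^3}$, Part 1 supplies the coercivity estimate $\Lambda_\mathfrak{b}(\mathbf{h})-\langle\mathbf{m},\mathbf{h}\rangle\geq (1-|\mathbf{m}|)|\mathbf{h}|-|\mathbb{E}\mathfrak{b}|$, so the strictly convex continuous map $\mathbf{h}\mapsto\Lambda_\mathfrak{b}(\mathbf{h})-\langle\mathbf{m},\mathbf{h}\rangle$ admits a unique minimizer $\mathbf{h}(\mathbf{m})$. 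The first-order condition reads $\nabla\Lambda_\mathfrak{b}(\mathbf{h}(\mathbf{m}))=\mathbf{m}$, and the minimum value is $-\Lambda_\mathfrak{b}^{*}(\mathbf{m})$ by definition of the Legendre transform.

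For Part 3, convexity of $\Lambda_\mathfrak{b}^{*}$ is automatic. Evaluating the sup at $\mathbf{h}=\mathbf{0}$ gives $\Lambda_\mathfrak{b}^{*}(\mathbf{m})\geq -\Lambda_\mathfrak{b}(\mathbf{0})$. Combining the Part 1 lower bound with $|\mathbf{m}|\leq 1$ after the substitution $\mathbf{k}=\mathbf{h}+\mathbb{E}\mathfrak{b}$ gives the uniform \emph{continuous} upper bound $\Lambda_\mathfrak{b}^{*}(\mathbf{m})\leq -\langle\mathbf{m},\mathbb{E}\mathfrak{b}\rangle$ on the whole closed ball, so $\Lambda_\mathfrak{b}^{*}$ is bounded on $B_{\rr^3}$. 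Continuity on the open ball is the standard local Lipschitz property of finite convex functions in the interior of their effective domain. To extend continuity to $\partial B_{\rr^3}$, I identify the boundary values explicitly: for $|\mathbf{m}|=1$, evaluating the sup in \eqref{eq:Lambda*} at $\mathbf{h}=t\mathbf{m}$ and letting $t\to\infty$, dominated convergence (dominator $|\mathfrak{b}|+\log 2$, using $\log 2\cosh(x)-|x|\to 0$ and $|t\mathbf{m}+\mathfrak{b}|-t\to \langle\mathbf{m},\mathfrak{b}\rangle$) yields $\Lambda_\mathfrak{b}(t\mathbf{m})-t\to\langle\mathbf{m},\mathbb{E}\mathfrak{b}\rangle$, hence the matching lower bound $\Lambda_\mathfrak{b}^{*}(\mathbf{m})\geq -\langle\mathbf{m},\mathbb{E}\mathfrak{b}\rangle$. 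Thus $\Lambda_\mathfrak{b}^{*}(\mathbf{m})=-\langle\mathbf{m},\mathbb{E}\mathfrak{b}\rangle$ on the unit sphere; then for any sequence $\mathbf{m}_n\to\mathbf{m}\in\partial B_{\rr^3}$ inside $B_{\rr^3}$, the continuous upper bound yields $\limsup_n\Lambda_\mathfrak{b}^{*}(\mathbf{m}_n)\leq \Lambda_\mathfrak{b}^{*}(\mathbf{m})$, while the lower semicontinuity inherent to any Legendre transform supplies the reverse inequality.

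The genuinely non-routine step is continuity at $\partial B_{\rr^3}$: this is not a consequence of general convex-analytic results for a compact convex set with smooth non-polyhedral boundary. The explicit determination of the boundary values via the asymptotic $\log 2\cosh(x)\sim|x|$ at infinity is the key observation, and it is precisely this asymptotic that also makes $B_{\rr^3}$ the effective domain of $\Lambda_\mathfrak{b}^{*}$.
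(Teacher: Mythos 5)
Your proof is correct, and in Part 3 it repairs a real gap in the paper's argument. Parts 1 and 2 follow the paper in spirit but execute by direct calculus on the classical function $f(\mathbf{v})=\log 2\cosh(|\mathbf{v}|)$ rather than via the operator inequalities the paper uses (namely $\|2\langle\mathbf{h},\mathbf{s}\rangle\|=|\mathbf{h}|$ and the convexity of $A\mapsto\log\tr e^{A}$); the two routes are interchangeable, and your full Hessian computation in Part 2 is slightly more complete than the paper's one-line radial reduction, which leaves the tangential eigenvalues $\tanh(r)/r$ implicit. The substantive difference is Part 3. The paper asserts that boundedness plus convexity of $\Lambda_\mathfrak{b}^{*}$ give continuity on $B_{\rr^3}$, but this general principle is false for a closed ball: the perspective function $(x,y)\mapsto x^2/(1-y)$, set to its $\liminf$ value $0$ at $(0,1)$, is convex, lower semicontinuous and bounded on the closed unit disk, yet discontinuous at $(0,1)$. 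You resolve the issue by first proving the sharper affine upper bound $\Lambda_\mathfrak{b}^{*}(\mathbf{m})\leq-\langle\mathbf{m},\mathbb{E}[\mathfrak{b}]\rangle$ and then determining the boundary values exactly from the asymptotic $\log 2\cosh(x)-|x|\to 0$; combined with the lower semicontinuity automatic for any Legendre transform, this gives continuity up to the sphere. This is a genuine repair of the step, not merely an alternative route.
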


\begin{proof}
	\begin{enumerate}
		\item For the first part, it is convenient to recall that $\Lambda_\mathfrak{b}$ is given by 
		$$ \Lambda_\mathfrak{b} = \mathbb{E} \log \tr e^{2 \langle \mathfrak{b} + \mathbf{h}, \mathbf{s} \rangle } $$
		with the spin vector $\mathbf{s}$ on $\cc^2$. For example, since the matrix norm $ \| 2 \langle  \mathbf{h}, \mathbf{s} \rangle \| = |h|$ the simple operator inequality $\langle \mathfrak{b} + \mathbf{h}, \mathbf{s} \rangle \leq \langle \mathfrak{b} , \mathbf{s} \rangle + |h| \mathbbm{1}$ yields the upper bound in \eqref{eq:lambdaineq}. A similar argument yields the Lipschitz continuity. For the lower bound, we recall that the map $A \mapsto \log \tr e^{A}$ is convex on the space of symmetric matrices. Hence, by Jensen's inequality
		$$  \Lambda_\mathfrak{b} = \mathbb{E} \log \tr e^{2 \langle \mathfrak{b} + \mathbf{h}, \mathbf{s} \rangle } \geq \log \tr e^{2 \langle \mathbb{E}[\mathfrak{b}] + \mathbf{h}, \mathbf{s} \rangle } \geq \|2 \langle \mathbb{E}[\mathfrak{b}] + \mathbf{h}, \mathbf{s} \rangle \| = | \mathbb{E} [\mathfrak{b}] + \mathbf{h}|.$$
		\item To prove the strict convexity of $\Lambda_{\mathfrak{b}}(\mathbf{h})$, it is enough to show that $\mathbf{h} \mapsto \log \tr e^{2 \langle \mathbf{b} + \mathbf{h}, \mathbf{s} \rangle }$ is strictly convex for any fixed $\mathbf{b} \in \rr^3$. Since a fixed $\mathbf{b}$ only translates the function, it is in fact enough to show that $\mathbf{h} \mapsto \log \tr e^{2 \langle  \mathbf{h}, \mathbf{s} \rangle }$ is strictly convex. Due to radial symmetry, it remains to show that the real function $f(r) := \log \cosh r $ is strictly convex. But $f''(r) = (\tanh(r))' = \frac1{\cosh^2(r)} > 0$ for all $r \in \rr$, which establishes the strict convexity. We fix now $\mathbf{m} \in \mathring{B}_{\rr^3}$ and consider the function $L(\mathbf{h}) := \langle \mathbf{m}, \mathbf{h} \rangle - \Lambda_{\mathfrak{b}}(\mathbf{h})$. Due to the lower bound in \eqref{eq:lower}, this function tends to $-\infty$. Hence, $L(\mathbf{h})$ attains its maximum at some $\mathbf{h}$ and since it is a critical point we have $\mathbf{m} = \nabla \Lambda_{\mathfrak{b}}(\mathbf{h})$. Since $L(\mathbf{h})$ is strictly concave, it follows that there is at most one critical point and thus $\mathbf{h}$ is the unique solution. The claim on the Legendre transform is immediate.
		
		\item It suffices to show boundedness since  Legendre transforms are always convex and every convex and bounded function is also continuous. However, for $\mathbf{m} \in B_{\rr^3}$ 
		$$ \langle \mathbf{m}, \mathbf{h} \rangle - \Lambda_\mathfrak{b}(\mathbf{h}) \leq |\mathbf{h}| - | \mathbb{E} [\mathfrak{b}] + \mathbf{h}| \leq |\mathbb{E} [\mathfrak{b}] | $$
		and hence $\Lambda_{\mathfrak{b}}^{*}(\mathbf{m}) \leq |\mathbb{E} [\mathfrak{b}] |.$ 
		
	\end{enumerate}
\end{proof}
	To shorten our notation, it is convenient to introduce the magnetization operator $$\mathbf{M} := \frac{2}{N} \mathbf{S},$$
	 which will appear frequently in the following. The key to the proof of Proposition~\ref{prop:lower} is the following
	 \begin{lemma}\label{lem:lower}
	 	Let $\mathbf{w} \in \rr^3$ be a unit vector and $d \in \nn$. Then, we have almost surely 
	 	\begin{equation}
	 		\lim_{N \to \infty} \tr \left[ \varrho_{\mathfrak{b}}(\mathbf{h}) (\langle \mathbf{w}, \mathbf{M} \rangle )^d \right] =  (\langle \mathbf{w}, \mathbf{m} \rangle )^d,
	 	\end{equation}
	 	where $\mathbf{m} = \nabla \Lambda_\mathfrak{b}(\mathbf{h})$ and $\varrho_{\mathfrak{b}}(\mathbf{h})$ are the states from \eqref{eq:trial}.
	 \end{lemma}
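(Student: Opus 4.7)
The plan is to exploit the product structure of the trial state $\varrho_{\mathfrak{b}}(\mathbf{h})$ together with the fact that the single-site operators $X_n := 2\langle \mathbf{w}, \mathbf{S}(n) \rangle$ act on different tensor factors and therefore commute. Since $\langle \mathbf{w}, \mathbf{M} \rangle = \frac{1}{N}\sum_{n=1}^N X_n$, expanding the $d$-th power gives
\begin{equation*}
\tr\left[ \varrho_{\mathfrak{b}}(\mathbf{h}) (\langle \mathbf{w}, \mathbf{M}\rangle)^d \right] = \frac{1}{N^d} \sum_{n_1, \dots, n_d = 1}^N \tr\left[ \varrho_{\mathfrak{b}}(\mathbf{h}) X_{n_1} \cdots X_{n_d} \right].
\end{equation*}
I would split this sum into tuples whose indices are pairwise distinct and tuples with at least one repetition. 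The second class contains only $\oh(N^{d-1})$ tuples, each contributing a term bounded by $\prod_j \|X_{n_j}\| \leq 1$, so after dividing by $N^d$ they give a correction of order $1/N$ that vanishes in the limit.

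For the pairwise-distinct tuples the product state factorizes, and using $[X_n, X_m] = 0$ for $n \neq m$ we get
\begin{equation*}
\tr\left[ \varrho_{\mathfrak{b}}(\mathbf{h}) X_{n_1} \cdots X_{n_d}\right] = \prod_{j=1}^d Y_{n_j}, \qquad Y_n := \tr\left[\varrho_{\mathfrak{b}}^{(n)}(\mathbf{h}) X_n\right].
\end{equation*}
Combined with the $\oh(1/N)$ correction, this rewrites the left-hand side as $\bigl(\frac{1}{N}\sum_{n=1}^N Y_n\bigr)^d + \oh(1/N)$. The $Y_n$ are i.i.d. random variables depending only on the disorder $\mathbf{b}(n)$ and are bounded in absolute value by $1$ since $\|X_n\| \leq 1$, so the classical strong law of large numbers yields almost sure convergence of $\frac{1}{N}\sum_n Y_n$ to $\Ee[Y_1]$.

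It remains to identify $\Ee[Y_1]$ with $\langle \mathbf{w}, \mathbf{m}\rangle$. Differentiating $\log \tr e^{2\langle \mathbf{h} + \mathbf{b}, \mathbf{s}\rangle}$ in $\mathbf{h}$ produces exactly the single-site Gibbs expectation of $2\mathbf{s}$, so $Y_1 = \langle \mathbf{w}, \nabla_{\mathbf{h}} \log \tr e^{2\langle \mathbf{h}+ \mathbf{b}(1), \mathbf{s}\rangle}\rangle$. Because this gradient is uniformly bounded by $1$ (the function is $1$-Lipschitz, as in Lemma~\ref{lem:lambda}), dominated convergence allows interchanging $\Ee$ with $\nabla_{\mathbf{h}}$, giving $\Ee[Y_1] = \langle \mathbf{w}, \nabla \Lambda_{\mathfrak{b}}(\mathbf{h})\rangle = \langle \mathbf{w}, \mathbf{m}\rangle$. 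Raising to the $d$-th power and invoking continuity of $x \mapsto x^d$ on the bounded range of $\frac{1}{N}\sum_n Y_n$ completes the proof.

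The only step requiring slight care is the operator-norm bookkeeping for the repeated-index terms, but since everything is bounded by $1$ this is harmless; the more conceptual point is recognizing that the randomness of the single-site expectations reduces the quantum computation to a scalar law of large numbers for bounded i.i.d.\ variables, which needs no moment hypothesis beyond integrability of $\mathfrak{b}$.
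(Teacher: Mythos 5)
Your proposal is correct and follows essentially the same route as the paper: tensor-product factorization of the trial state, counting that off-diagonal (distinct-index) tuples dominate and factorize, an $\mathcal{O}(1/N)$ bound for repeated-index tuples, the strong law of large numbers for the bounded i.i.d.\ single-site expectations $Y_n$, and identification of $\mathbb{E}[Y_1]$ as $\langle\mathbf{w},\nabla\Lambda_{\mathfrak{b}}(\mathbf{h})\rangle$ via dominated convergence. The paper phrases the combinatorics through $d!\binom{N}{d}/N^d\to 1$ and handles $d=1$ separately before reducing $d>1$ to it, but these are cosmetic differences; the argument is the same.
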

\begin{proof}
We start with the special case $d = 1$. Then, the operator $\langle \mathbf{w}, \mathbf{M} \rangle$ is also of tensor product form and we have 
$$ \tr \left[ \varrho_{\mathfrak{b}}(\mathbf{h}) \langle \mathbf{w}, \mathbf{M} \rangle  \right] =  \frac{2}{N} \sum_{n = 1}^{N} \left[ \varrho_{\mathfrak{b}}^{(n)}(\mathbf{h}) \langle \mathbf{w}, \mathbf{S}(n) \rangle  \right] $$
and the right-hand side is an average of bounded i.i.d. random variables. Hence by the strong law of large numbers, we have almost surely
\begin{align*} \lim_{N \to \infty} \tr \left[ \varrho_{\mathfrak{b}}(\mathbf{h})\langle \mathbf{w}, \mathbf{M} \rangle \right] &= \mathbb{E} \left[ \tr \varrho_{\mathfrak{b}}^{(1)}(\mathbf{h}) \langle \mathbf{w}, 2 \mathbf{S}(1) \rangle  \right] = \mathbb{E} \left[\left[ \frac{d}{ds} \log \tr \exp( \langle  \mathbf{h} + \mathbf{b}(1) + s \mathbf{w}, 2 \mathbf{S}(1) \rangle )  \right]_{s = 0}\right] \\ &= \langle \nabla \Lambda_\mathfrak{b}(\mathbf{h}), \mathbf{w} \rangle,
 \end{align*}
 where the second identity is an elementary fact and the last equality is due to dominated convergence and the uniform boundedness of the derivative. 
 For higher powers $d > 1$, the main observation is that for distinct integers $n_1, \ldots, n_d$ we have the identity
 $$   \tr \left[ \varrho_{\mathfrak{b}}(\mathbf{h}) \prod_{k = 1}^{d} \langle \mathbf{w}, \mathbf{S}(n_k) \rangle  \right]  =   \prod_{k = 1}^{d} \tr \left[ \varrho_{\mathfrak{b}}(\mathbf{h})  \langle \mathbf{w}, \mathbf{S}(n_k) \rangle  \right]         $$
 due to the tensor product structure of $\varrho_{\mathfrak{b}}(\mathbf{h})$. The second ingredient is that pairwise different indices $n_1, \ldots, n_d$ form the leading combinatorial contribution for large $N$ since $d! \binom{N}{d}/N^{d} \to 1$ as $N \to \infty$. Hence,
 \begin{align*}
 	 \tr \left[ \varrho_{\mathfrak{b}}(\mathbf{h}) (\langle \mathbf{w}, \mathbf{M} \rangle )^d \right] &= \frac{2^d d!}{N^d} \sum_{n_1 < \cdots < n_d} \tr \left[ \varrho_{\mathfrak{b}}(\mathbf{h}) \prod_{k = 1}^{d} \langle \mathbf{w}, \mathbf{S}(n_k) \rangle  \right] + o(1) \\
 	&= \frac{2^d d!}{N^d} \sum_{n_1 < \cdots < n_d} \prod_{k = 1}^{d} \tr \left[ \varrho_{\mathfrak{b}}(\mathbf{h})  \langle \mathbf{w}, \mathbf{S}(n_k) \rangle  \right]    +o(1) \\
 	&= \left(\tr \left[ \varrho_{\mathfrak{b}}(\mathbf{h}) \langle \mathbf{w}, \mathbf{M} \rangle  \right]\right)^d +o(1),
 \end{align*}
 where all $o(1)$ are uniform in the disorder. By the $d =1$ convergence result, we obtain the desired almost sure convergence $\lim_{N \to \infty} \tr \left[ \varrho_{\mathfrak{b}}(\mathbf{h}) (\langle \mathbf{w}, \mathbf{M} \rangle )^d \right] =  (\langle \mathbf{w}, \mathbf{m} \rangle )^d$.
\end{proof}

With these preparations, we are ready to spell the proof of Proposition~\ref{prop:lower}.
\begin{proof}[Proof of Proposition~\ref{prop:lower}]
	Our starting point is the following algebraic fact: for any polynomial $\textrm{Q}$ there exist some unit vectors $\mathbf{w}_1, \ldots, \mathbf{w}_K$, real numbers $\alpha_1, \ldots, \alpha_K$ and natural numbers $d_1, \ldots, d_K$ such that 
	\begin{equation}
		\textrm{Q}(\mathbf{M}) := \sum_{k = 1}^{K} \alpha_k (\langle \mathbf{w}_k, \mathbf{M} \rangle )^{d_k}
	\end{equation}
	holds as operator identity and also as identity for functions on $\rr^3$ \cite[Lemma A.6]{MW23}. Hence, Lemma~\ref{lem:lower} implies the almost sure convergence
	$$ 	\lim_{N \to \infty} \tr \left[ \varrho_{\mathfrak{b}}(\mathbf{h}) \textrm{Q}( \mathbf{M})  \rangle )^d \right] = \textrm{Q}( \mathbf{m}) $$ 
	with $\mathbf{m} = \nabla \Lambda_\mathfrak{b}(\mathbf{h})$. We now employ Gibbs' variational principle \eqref{eq:Gibbs} and obtain with $\mathbf{m} = \nabla \Lambda_\mathfrak{b}(\mathbf{h})$
	\begin{align*}
		p_N(\textrm{P},\mathfrak{b}) &\geq \frac1N \left(\tr[\varrho_{\mathfrak{b}}(\mathbf{h}) H_{\textrm{P},\mathfrak{b}}] - \tr[ \varrho_{\mathfrak{b}}(\mathbf{h})\log \varrho_{\mathfrak{b}}(\mathbf{h})] \right)\\
		&= \tr[ \varrho_{\mathfrak{b}}(\mathbf{h}) (\textrm{P}(\mathbf{M}) - \langle \mathbf{h}, \mathbf{M} \rangle ) ] + p_N(\langle \mathbf{h}, \mathbf{x} \rangle ,\mathfrak{b}) \\
		&\overset{N \to \infty}{\to} \textrm{P}(\mathbf{m}) - \langle \mathbf{h}, \mathbf{m} \rangle + \Lambda_{\mathfrak{b}}(\mathbf{h}) = \textrm{P}(\mathbf{m}) - \Lambda_{\mathfrak{b}}^{*}(\mathbf{m}) \quad \text{a.s.}
	\end{align*}
	For the second line, we recall that $\varrho_{\mathfrak{b}}(\mathbf{h})$ is the Gibbs state of the Hamiltonian $H_{\mathfrak{b}} + N \langle \mathbf{h}, \mathbf{M}\rangle$ and the last line is due to our previous result on tracial expectations for polynomials, the almost sure convergence of the pressure for linear Hamiltonian and Proposition~\ref{lem:lambda}. Since this holds true for any $\mathbf{h} \in \rr^3$, we obtain in view of Lemma~\ref{lem:lambda}
	$$ \sup_{ \mathbf{m} \in \mathring{B}_{\rr^3}} \left(\textrm{P}(\mathbf{m}) - \Lambda^{*}_\mathfrak{b}(\mathbf{m}) \right) \leq 	\liminf_{N \to \infty} p_N(\textrm{P},\mathfrak{b}). $$
	By continuity, the lower bound extends to the whole unit ball $B_{\rr^3}$.
\end{proof}

\subsection{Control of Thermal Fluctuations by Random Perturbations}\label{sec:therm}

We start by introducing some further standard notions. Given an Hamiltonian $H$ on $\mathcal{H}_N$, we denote by $p_N(H) = \frac1N \log \tr e^{H}$ the corresponding specific pressure and by $Z_N(H) := \tr e^{H}$ the respective partition function.  Moreover for an observable $A$, i.e. a self-adjoint operator on $\mathcal{H}_N$ we introduce the  thermal average 
\begin{equation}\label{eq:Gibbsav}
	\langle A \rangle_{H} := \frac{\tr e^{H} A}{\tr e^{H}} = \tr \varrho_H A
\end{equation}
and we denote by $\mathring{A} := A - \langle A \rangle_{H}$ the 'centered' observable. The variance $\langle \mathring{A}^2 \rangle_H$ measures the thermal fluctuations of $A$. In view of the proof of Proposition~\ref{prop:lower}, it is tempting to show a matching upper bound by showing that the magnetization operator $\mathbf{M}$ concentrates around its thermal average. However, this is not even true for the classical Curie-Weiss model in the low temperature phase. The problem is that the limiting Gibbs state might consists of multiple pure states and one would like to select on of the pure states to carry on the analysis. An elegant approach originated in spin glass theory is to let a further source of randomness pick (randomly) such a pure state. Indeed, we consider the perturbed Hamiltonian
\begin{equation}\label{eq:perturb}
	H_{\textrm{P}, \mathfrak{b}}(\pmb{\gamma}) = 	H_{\textrm{P}, \mathfrak{b}} + \sqrt{N} \langle \pmb{\gamma}, \mathbf{M} \rangle,
\end{equation}
where $\gamma_x, \gamma_y, \gamma_z$ are i.i.d. standard Gaussians mutually independent of the random fields $\{\mathbf{b}(n)\}_{n = 1}^{\infty}$.
The scaling with $\sqrt{N}$ chosen to be big enough to enforce thermal concentration, yet small enough to effect the pressure only on a sub-leading order. The latter claim is simple and recorded for future purposes in the following 
\begin{lemma}\label{lem:gauss}
	For any realization of the random field $\{\mathbf{b}(n)\}_{n = 1}^{\infty}$,
	\begin{equation}
		p_N(H_{\textrm{P}, \mathfrak{b}}) \leq \mathbb{E}_{\pmb{\gamma}}[p_N(H_{\textrm{P}, \mathfrak{b}}(\pmb{\gamma}))] \leq p_N(H_{\textrm{P}, \mathfrak{b}}) + \frac{C}{\sqrt{N}},
	\end{equation}
	where $\mathbb{E}_{\pmb{\gamma}}$ denotes the disorder average with respect to $\pmb{\gamma}$ and $C > 0$ is some universal constant.
\end{lemma}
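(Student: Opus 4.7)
The two bounds rely on essentially independent ingredients, both standard. For the lower bound, the plan is to exploit convexity. Since $A \mapsto \log \tr e^{A}$ is a convex function on the space of self-adjoint operators and the map $\pmb{\gamma} \mapsto H_{\textrm{P}, \mathfrak{b}} + \sqrt{N}\langle \pmb{\gamma}, \mathbf{M}\rangle$ is affine in $\pmb{\gamma}$, the composition $\pmb{\gamma} \mapsto \log \tr e^{H_{\textrm{P}, \mathfrak{b}}(\pmb{\gamma})}$ is convex in $\pmb{\gamma}$. Jensen's inequality, together with $\mathbb{E}_{\pmb{\gamma}}[\pmb{\gamma}] = \mathbf{0}$, then immediately yields
\begin{equation*}
\mathbb{E}_{\pmb{\gamma}}\bigl[\log \tr e^{H_{\textrm{P}, \mathfrak{b}}(\pmb{\gamma})}\bigr] \geq \log \tr e^{H_{\textrm{P}, \mathfrak{b}}},
\end{equation*}
which after dividing by $N$ is the desired lower bound.

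For the upper bound, I would use the elementary norm estimate $|\log \tr e^{H+X} - \log \tr e^{H}| \leq \|X\|$, valid for any self-adjoint $H, X$. This can be verified by differentiating $f(t) = \log \tr e^{H + tX}$ and noting that $f'(t) = \langle X\rangle_{H+tX}$ is bounded in absolute value by $\|X\|$ uniformly in $t \in [0,1]$. Applying this to $X = \sqrt{N} \langle \pmb{\gamma}, \mathbf{M}\rangle$ gives
\begin{equation*}
\log \tr e^{H_{\textrm{P}, \mathfrak{b}}(\pmb{\gamma})} \leq \log \tr e^{H_{\textrm{P}, \mathfrak{b}}} + \sqrt{N} \,\|\langle \pmb{\gamma}, \mathbf{M}\rangle\|.
\end{equation*}
The next step is the operator-norm bound $\|\langle \pmb{\gamma}, \mathbf{M}\rangle\| \leq |\pmb{\gamma}|$, which reduces via $\mathbf{M} = \tfrac{2}{N}\sum_{n=1}^{N}\mathbf{S}(n)$ and the triangle inequality to the single-qubit identity $\|\gamma_x s_x + \gamma_y s_y + \gamma_z s_z\| = |\pmb{\gamma}|/2$; the latter is immediate from the $SU(2)$ algebra since the matrix squares to $(|\pmb{\gamma}|/2)^2 \mathbbm{1}$. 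Taking the $\pmb{\gamma}$-expectation, dividing by $N$, and setting $C := \mathbb{E}_{\pmb{\gamma}}[|\pmb{\gamma}|]$ — a finite universal constant, as $|\pmb{\gamma}|$ follows a chi distribution with three degrees of freedom — yields the claimed upper bound.

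There is no genuine obstacle; both steps are short. The only point that deserves a moment of care is the uniform operator-norm estimate on the perturbation, since the Hilbert space $\mathcal{H}_N$ has dimension growing with $N$ and one must verify that the bound is dimension-free. This is handled by the single-site argument above, which yields a bound that does not depend on $N$ and keeps the perturbation of the pressure at the sub-leading order $N^{-1/2}$.
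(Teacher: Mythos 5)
Your proposal is correct and matches the paper's argument: Jensen's inequality (via convexity of $A \mapsto \log\tr e^A$ along the affine perturbation) for the lower bound, and the pointwise operator-norm bound $p_N(H_{\textrm{P},\mathfrak{b}}(\pmb{\gamma})) \leq p_N(H_{\textrm{P},\mathfrak{b}}) + |\pmb{\gamma}|/\sqrt{N}$ followed by taking the $\pmb{\gamma}$-expectation for the upper bound. You have merely spelled out the single-site verification of $\|\langle\pmb{\gamma},\mathbf{M}\rangle\|\leq|\pmb{\gamma}|$, which the paper leaves implicit.
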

\begin{proof}
	The lower bound follows from Jensen's inequality and for the upper bound we have the pointwise operator norm bound $p_N(H_{\textrm{P}, \mathfrak{b}}(\pmb{\gamma})) \leq p_N(H_{\textrm{P}, \mathfrak{b}}) + \frac{|\pmb{\gamma}|}{\sqrt{N}}$ from which the claim follows after taking expectations.
\end{proof}

The following proposition establish the concentration of the magnetization vector and is one main ingredient for our proof.
\begin{proposition}\label{prop:fluc}
	For any realization of the random field $\{\mathbf{b}(n)\}_{n = 1}^{\infty}$, the (averaged) thermal fluctuations of $\mathbf{M}$ is bounded by
	\begin{equation}\label{eq:fluc}
  \sum_{\mu = x,y,z}	\mathbb{E}_{\pmb{\gamma}}[\langle \mathring{\mathbf{M}}^2_\mu \rangle_{H_{\textrm{P}, \mathfrak{b}}(\pmb{\gamma})}] \leq  ( C_P + 6 \, \overline{\mathbf{b}}^{2/3}) N^{-1/3},
	\end{equation}
	with a constant $C_P$ which only depends on the polynomial $\textrm{P}$  and the abbreviation $\overline{\mathbf{b}}^2 = \left( \sum_{n = 1}^{N} \frac1N |\mathbf{b}(n)|\right)^2$.
\end{proposition}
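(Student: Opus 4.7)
The plan is to relate the noncommutative thermal variance $\langle \mathring{\mathbf{M}}_\mu^2\rangle_{H_{\textrm{P},\mathfrak{b}}(\pmb{\gamma})}$ to the second derivative of the pressure with respect to the Gaussian parameter $\pmb{\gamma}$, and then apply Gaussian integration by parts. Unlike in the classical case, this second derivative does not equal the variance but rather the \emph{Duhamel two-point function}
$$\chi_\mu(\pmb{\gamma}) := \int_0^1 \bigl\langle e^{s H_{\textrm{P},\mathfrak{b}}(\pmb{\gamma})}\mathbf{M}_\mu\, e^{-s H_{\textrm{P},\mathfrak{b}}(\pmb{\gamma})}\,\mathring{\mathbf{M}}_\mu\bigr\rangle_{H_{\textrm{P},\mathfrak{b}}(\pmb{\gamma})}\,ds,$$
and the gap between the two is controlled by a double commutator that carries an extra factor $1/N$ thanks to the spin algebra $[\mathbf{M}_\mu,\mathbf{M}_\nu] = \tfrac{2i}{N}\varepsilon_{\mu\nu\rho}\mathbf{M}_\rho$.

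\textbf{Execution.} First, applying Duhamel's formula for $\partial_{\gamma_\mu}\, e^{H_{\textrm{P},\mathfrak{b}}(\pmb{\gamma})}$ together with cyclicity of the trace gives the identity $\partial_{\gamma_\mu}^2 p_N(\pmb{\gamma}) = \chi_\mu(\pmb{\gamma})$. Second, a per-pair spectral comparison in an eigenbasis of $H_{\textrm{P},\mathfrak{b}}(\pmb{\gamma})$ establishes the Falk--Bruch-type bound
$$\bigl\langle \mathring{\mathbf{M}}_\mu^2\bigr\rangle \;\le\; \chi_\mu(\pmb{\gamma}) \;+\; \tfrac{1}{12}\bigl\langle[\mathbf{M}_\mu,[\mathbf{M}_\mu,\, H_{\textrm{P},\mathfrak{b}}(\pmb{\gamma})]]\bigr\rangle,$$
which reduces term by term to the elementary numerical inequality $1+e^\Delta-2(e^\Delta-1)/\Delta \le (e^\Delta-1)\Delta/6$ valid for every $\Delta \in \rr$. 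Third, since $\partial_{\gamma_\mu}p_N = N^{-1/2}\langle \mathbf{M}_\mu\rangle$ has absolute value bounded by $N^{-1/2}$, Stein's lemma (Gaussian integration by parts) gives $\mathbb{E}_{\pmb{\gamma}}[\chi_\mu(\pmb{\gamma})] = \mathbb{E}_{\pmb{\gamma}}[\gamma_\mu\,\partial_{\gamma_\mu}p_N(\pmb{\gamma})] \le N^{-1/2}\,\mathbb{E}|\gamma_\mu|$. Fourth, iterated application of the spin algebra produces $\|[\mathbf{M}_\mu,[\mathbf{M}_\mu, H_{\textrm{P},\mathfrak{b}}(\pmb{\gamma})]]\| \le N^{-1}\bigl(C_{\textrm{P}} + C_0\overline{\mathbf{b}} + C_0|\pmb{\gamma}|/\sqrt{N}\bigr)$, with the three pieces originating from $H_{\textrm{P}}$, $H_{\mathfrak{b}}$ and the Gaussian perturbation, respectively.

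\textbf{Final optimization and main obstacle.} Combining the four steps and summing over $\mu \in \{x,y,z\}$ yields $\sum_\mu \mathbb{E}_{\pmb{\gamma}}\langle \mathring{\mathbf{M}}_\mu^2\rangle \le C\bigl(N^{-1/2}+(1+\overline{\mathbf{b}})/N\bigr)$, and the claimed form then follows from Young's inequality $\overline{\mathbf{b}}/N \le \overline{\mathbf{b}}^{2/3}/N^{1/3}$ (valid for $\overline{\mathbf{b}}\le N^2$), together with the trivial operator bound $\sum_\mu \langle\mathring{\mathbf{M}}_\mu^2\rangle \le 3$, which already lies below $6\overline{\mathbf{b}}^{2/3}N^{-1/3}$ in the complementary regime $\overline{\mathbf{b}}\gtrsim \sqrt{N}$. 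The only nontrivial ingredient is the Falk--Bruch-type inequality in step two, as the noncommutativity could a priori render the variance much larger than the Duhamel function when the Hamiltonian has large spectral gaps; everything else reduces to Gaussian concentration, cyclicity of the trace, and the $1/N$ scaling inherent to the $SU(2)$ commutation relations.
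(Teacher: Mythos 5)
Your proposal is correct in its essentials, but it follows a genuinely different path from the paper's argument, and the comparison is worth spelling out.

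The paper also begins from the spectral identity $\langle\mathring{\mathbf{M}}_\mu^2\rangle = \tfrac{1}{2Z}\sum_{k,l}|M_{k,l}^\mu|^2(e^{E_k}+e^{E_l})$ and passes to the Duhamel susceptibility $\partial_{\gamma_\mu}^2 p_N$, but it uses the weaker elementary inequality $\cosh x\le \sinh x+\tfrac{\sinh x}{x}$ to split the variance as $\langle\mathring{\mathbf{M}}_\mu^2\rangle \le \partial^2_{\gamma_\mu}p_N + R$, with a remainder $R=\tfrac{1}{2Z}\sum|M_{k,l}^\mu|^2|e^{E_k}-e^{E_l}|$ that it then controls by H\"older's inequality (exponents $3/2$, $3$) against the \emph{squared single commutator} $-\langle[\mathbf{M}_\mu,H]^2\rangle$. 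Since $\|[\mathbf{M}_\mu,H_{\textrm{P},\mathfrak{b}}(\pmb\gamma)]\|$ is only $O(1)$ in norm, all of the decay has to come from the Gaussian-derivative factor, giving $(N^{-1/2})^{2/3}=N^{-1/3}$; that is where the exponent $-1/3$ and the $2/3$-powers of $\overline{\mathbf{b}}$ in the statement originate. You instead apply the sharper per-pair numerical inequality $\coth x\le \tfrac1x+\tfrac{x}{3}$, which is exactly the Falk--Bruch-type bound $\langle\mathring{\mathbf{M}}_\mu^2\rangle\le \chi_\mu + \tfrac{1}{12}\langle[\mathbf{M}_\mu,[\mathbf{M}_\mu,H]]\rangle$, against the \emph{double commutator}. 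Since each additional commutator with $\mathbf{M}_\mu$ brings down a factor $1/N$ from the $SU(2)$ algebra, the double commutator is of norm $O(N^{-1}(C_{\textrm{P}}+\overline{\mathbf{b}}+|\pmb\gamma|/\sqrt{N}))$, so you avoid H\"older entirely and obtain the better intermediate rate $O(N^{-1/2}+\overline{\mathbf{b}}/N)$, which you then fold back into the stated $N^{-1/3}$ form by a case distinction on the size of $\overline{\mathbf{b}}$ relative to $\sqrt{N}$, supplemented by the trivial bound $\sum_\mu\langle\mathring{\mathbf{M}}_\mu^2\rangle\le 3$. Both approaches have the same skeleton (spectral decomposition, per-pair comparison, Gaussian integration by parts, a commutator norm bound), but the paper trades one commutator iteration for an interpolation step, whereas you trade one more commutator iteration for a sharper scalar inequality and a cleaner (and in fact stronger) estimate. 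Two minor points worth making explicit if you write this up fully: (i) the per-pair comparison is legitimate precisely because all three quantities share the same nonnegative weights $|M_{k,l}^\mu|^2$ in the eigenbasis of $H$; (ii) $[\mathbf{M}_\mu,[\mathbf{M}_\mu,H]]$ is self-adjoint (commutator of a self-adjoint with an anti-self-adjoint operator), so its thermal average is indeed controlled by its operator norm.
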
 

The bound \eqref{eq:fluc} is most likely not sharp and we expect a central limit theorem (CLT) on the scale of $N^{-1/2}$. Such a result would be very interesting, but this has not even been established for the deterministic model $H_{\textrm{P}}$. Note that the $\mathbf{b}$-expectation of the term $\overline{\mathbf{b}}_{\mu}^{2/3}$ is uniformly bounded by $\mathbb{E}[|\mathfrak{b}|]$.

Our proof follows the strategy in \cite{STT20}, where an analogous result has (implicitly) been derived for the Quantum Hopfield Model. Our main observation is that while the main line of the proof in \cite{STT20} is tailored to quadratic Hamiltonians, the concentration argument can be lifted to more general models with only slight complications due to a higher degree of non-commutativity and the random external field. 

We start from the exact diagonalization of the Hamiltonian $H_{\textrm{P}, \mathfrak{b}}(\pmb{\gamma})$. There exist an orthonormal basis of eigenvectors $\{\psi_k\}_{k = 1}^{2^N} \subset \mathcal{H}_N$ such that 
	$$ H_{\textrm{P}, \mathfrak{b}}(\pmb{\gamma}) \psi_k = E_k \psi_k $$
	with the corresponding real eigenvalues $\{E_k\}_{k = 1}^{2^N}$. Of course, $\psi_k, E_k$ depend on the further parameters $\textrm{P}, \mathfrak{b}, \pmb{\gamma}$, but we drop this dependencies from our notation. Moreover, we set $M_{k,l}^{\mu} := \langle \psi_k, \mathring{\mathbf{M}}_\mu \psi_l \rangle$. 
	We need a few spectral identities.
	\begin{lemma}\label{lem:iden}
		\begin{enumerate}
			\item  \begin{equation}\label{eq:specav} 
				\langle \mathring{\mathbf{M}}^2_\mu \rangle_{H_{\textrm{P}, \mathfrak{b}}(\pmb{\gamma})} = \frac{1}{2 Z_N(H_{\textrm{P}, \mathfrak{b}}(\pmb{\gamma}))} \sum_{k,l = 1}^{2^N} |M_{k,l}^{\mu}|^2 (e^{E_k} + e^{E_l})
			\end{equation}
			\item  \vspace{-0.5cm} \begin{equation}\label{eq:speccomm}
				\langle [\mathbf{M}_\mu, H_{\textrm{P}, \mathfrak{b}}(\pmb{\gamma})]^2 \rangle_{H_{\textrm{P}, \mathfrak{b}}(\pmb{\gamma})} = -\frac{1}{2 Z_N(H_{\textrm{P}, \mathfrak{b}}(\pmb{\gamma}))} \sum_{k,l = 1}^{2^N} |M_{k,l}^{\mu}|^2 (e^{E_k} + e^{E_l})|E_k - E_l|^2,
			\end{equation}
			where $[A,B] := AB - BA$ denotes the commutator.
			\item  \vspace{-0.3cm} \begin{equation}\label{eq:specderiv}
				\frac{\partial^2}{\partial \gamma_u^2} p_N(H_{\textrm{P}, \mathfrak{b}}(\pmb{\gamma})) = \frac{1}{ Z_N(H_{\textrm{P}, \mathfrak{b}}(\pmb{\gamma}))} \sum_{k,l = 1}^{2^N} |M_{k,l}^{\mu}|^2 \frac{e^{E_k} - e^{E_l}}{E_k - E_l},
			\end{equation}
			where $\frac{e^{E_k} - e^{E_l}}{E_k - E_l}$ is interpreted as $e^{E_k}$ if $E_k = E_l$.
		\end{enumerate}
	\end{lemma}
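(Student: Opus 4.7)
The plan is to diagonalize $H := H_{\textrm{P},\mathfrak{b}}(\pmb{\gamma})$ once and for all and to expand every trace in the orthonormal eigenbasis $\{\psi_k\}$. With the auxiliary shorthand $\tilde{M}^\mu_{k,l} := \langle \psi_k, \mathbf{M}_\mu \psi_l\rangle$, two facts will be used repeatedly: first, $M^\mu_{k,l} = \tilde{M}^\mu_{k,l}$ whenever $k\neq l$ (the centering constant $\langle \mathbf{M}_\mu\rangle_H$ only shifts diagonal matrix elements); second, $\mathring{\mathbf{M}}_\mu$ is self-adjoint, so $|M^\mu_{k,l}|^2 = |M^\mu_{l,k}|^2$. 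All three identities then follow from a spectral expansion combined with a $(k,l)\leftrightarrow(l,k)$ symmetrization.

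For \eqref{eq:specav}, I would insert a resolution of the identity twice:
\[
\tr e^H \mathring{\mathbf{M}}_\mu^2 = \sum_{k} e^{E_k}\langle \psi_k, \mathring{\mathbf{M}}_\mu^2 \psi_k\rangle = \sum_{k,l} e^{E_k}|M^\mu_{k,l}|^2.
\]
Averaging the prefactor $e^{E_k}$ with its $(k,l)\leftrightarrow(l,k)$ counterpart $e^{E_l}$ and dividing by $Z_N$ gives the claim. For \eqref{eq:speccomm}, I first compute $\langle \psi_k, [\mathbf{M}_\mu, H]\psi_l\rangle = (E_l-E_k)\tilde{M}^\mu_{k,l} = (E_l-E_k)M^\mu_{k,l}$ (the second equality is free because the prefactor kills the diagonal). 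Squaring via a further resolution of identity and using $\tilde{M}^\mu_{l,k} = \overline{\tilde{M}^\mu_{k,l}}$ yields
\[
\tr e^H [\mathbf{M}_\mu, H]^2 = -\sum_{k,l} e^{E_k}|E_k-E_l|^2 |M^\mu_{k,l}|^2,
\]
and the same symmetrization trick produces \eqref{eq:speccomm}.

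The only identity that needs a genuine computation is \eqref{eq:specderiv}. Here I would invoke Duhamel's formula $\partial_{\gamma_\mu} e^{H} = \sqrt{N}\int_0^1 e^{sH}\mathbf{M}_\mu e^{(1-s)H}\, ds$ together with cyclicity of the trace to get $\partial_{\gamma_\mu}\log Z_N = \sqrt{N}\langle \mathbf{M}_\mu\rangle_H$. Differentiating once more and expanding in the eigenbasis yields
\[
\frac{\partial^2 p_N}{\partial \gamma_\mu^2} = \frac{1}{Z_N}\sum_{k,l} |\tilde{M}^\mu_{k,l}|^2 \int_0^1 e^{sE_k + (1-s)E_l}\, ds - \langle \mathbf{M}_\mu\rangle_H^2,
\]
and the elementary Feynman integral evaluates to $(e^{E_k}-e^{E_l})/(E_k-E_l)$, reducing to $e^{E_k}$ when $E_k = E_l$. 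The main, and essentially only, step that requires attention is converting $|\tilde{M}^\mu_{k,l}|^2$ to $|M^\mu_{k,l}|^2$ while eliminating the subtraction $\langle \mathbf{M}_\mu\rangle_H^2$: this is achieved by expanding $|M^\mu_{k,k}|^2 = (\tilde{M}^\mu_{k,k} - \langle \mathbf{M}_\mu\rangle_H)^2$ on the diagonal and using the identity $Z_N^{-1}\sum_k \tilde{M}^\mu_{k,k}\, e^{E_k} = \langle \mathbf{M}_\mu\rangle_H$, which shows that the cross term and the $\langle \mathbf{M}_\mu\rangle_H^2$ offset precisely cancel the subtraction. No genuine estimates appear anywhere, so the lemma is purely algebraic once the eigenbasis is in place.
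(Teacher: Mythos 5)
Your proposal is correct and follows essentially the same route as the paper: expand in the eigenbasis, symmetrize in $(k,l)$ for parts (1) and (2), and for part (3) apply Duhamel's formula and then reconcile the non-centered diagonal matrix elements with the centered ones via $Z_N^{-1}\sum_k \tilde{M}^\mu_{k,k}\,e^{E_k} = \langle \mathbf{M}_\mu\rangle_H$, so that the $\langle \mathbf{M}_\mu\rangle_H^2$ offset cancels. The only cosmetic difference is notation (your $\tilde{M}^\mu_{k,l}$ is the paper's $\widehat{M}^\mu_{k,l}$), and you have correctly tracked the $\sqrt{N}$ factors from $\partial_{\gamma_\mu}H = \sqrt{N}\,\mathbf{M}_\mu$ against the $1/N$ in $p_N$, which the paper's write-up elides.
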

	
	These identities are well known and the proofs are based on direct computations. For the reader's convenience, we present a proof in the appendix. We need a further elementary bound.
	\begin{lemma}\label{lem:cosh}
		For any $x > 0$,
		\begin{equation}
			\cosh x \leq \sinh x + \frac{\sinh x}{x}
		\end{equation}
	\end{lemma}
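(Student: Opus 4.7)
The plan is to reduce the claimed inequality to the elementary bound $e^{y} \geq 1 + y$, valid for all $y \in \rr$. First, using the identity $\cosh x - \sinh x = e^{-x}$, the claim is equivalent to
\[
e^{-x} \leq \frac{\sinh x}{x}.
\]
Multiplying both sides by $2x e^{x}$ — a positive quantity for $x > 0$ — this is in turn equivalent to
\[
2x \leq e^{2x} - 1,
\]
which is precisely $e^{y} \geq 1 + y$ applied with $y = 2x > 0$. Since the argument consists of two algebraic rearrangements and a single invocation of the standard convexity bound for the exponential, there is essentially no obstacle.

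An alternative route is via Taylor series: both $\cosh x = \sum_{k \geq 0} \frac{x^{2k}}{(2k)!}$ and $\sinh x + \frac{\sinh x}{x}$ admit explicit even power-series expansions, and one can compare coefficients term-by-term using the bound $\frac{1}{(2k)!} \leq \frac{1}{(2k+1)!} + \frac{1}{(2k-1)!}$ for $k \geq 1$ (with equality in the $k=0$ slot). A third option is to set $f(x) := \sinh x + \frac{\sinh x}{x} - \cosh x$, verify $\lim_{x \to 0^+} f(x) = 0$, and show $f'(x) \geq 0$ on $(0,\infty)$; this works but demands slightly more care than the first approach. I would therefore present the proof via the exponential inequality, as it is the shortest and most transparent.
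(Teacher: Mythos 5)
Your proof is correct and follows essentially the same route as the paper: both arguments reduce the claim to the elementary bound $e^{u} \geq 1 + u$ (applied with $u = 2x$), with the paper doing so by absorbing $e^{-x}$ into the identity $\cosh x = \sinh x + e^{-x}$ and yours by directly rearranging $e^{-x} \leq \sinh x / x$. Your version is, if anything, slightly more transparent algebraically.
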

	The proof of Lemma~\ref{lem:cosh} can also be found in the appendix and we dive now into the proof of Proposition~\ref{prop:fluc}. The main idea is to relate the thermal fluctuations $\langle \mathring{\mathbf{M}}^2_\mu \rangle_{H_{\textrm{P}, \mathfrak{b}}(\pmb{\gamma})}$ to the Gaussian derivative $\frac{\partial^2}{\partial \gamma_u^2} p_N(H_{\textrm{P}, \mathfrak{b}}(\pmb{\gamma}))$. While in the classical case one obtains an exact identity, it was observed in \cite{STT20} that one can still control the thermal fluctuations by $\frac{\partial^2}{\partial \gamma_u^2} p_N(H_{\textrm{P}, \mathfrak{b}}(\pmb{\gamma}))$ and some commutator bounds, which are more involved in our case.
	
	\begin{proof}[Proof of Proposition~\ref{prop:fluc}]
		We start from the identity \eqref{eq:specav} and estimate
		\begin{align*}
			\langle \mathring{\mathbf{M}}^2_\mu \rangle_{H_{\textrm{P}, \mathfrak{b}}(\pmb{\gamma})} &= \frac{1}{2 Z_N(H_{\textrm{P}, \mathfrak{b}}(\pmb{\gamma}))} \sum_{k,l = 1}^{2^N} |M_{k,l}^{\mu}|^2 (e^{E_k} + e^{E_l}) \\
			&\leq  \frac{1}{Z_N(H_{\textrm{P}, \mathfrak{b}}(\pmb{\gamma}))} \sum_{k,l = 1}^{2^N} |M_{k,l}^{\mu}|^2 \left(\frac{e^{E_k} - e^{E_l}}{E_k - E_l}) + \frac12 |e^{E_k} - e^{E_l}| \right) \\
			&= \frac{\partial^2}{\partial \gamma_u^2} p_N(H_{\textrm{P}, \mathfrak{b}}(\pmb{\gamma})) + \frac{1}{2 Z_N(H_{\textrm{P}, \mathfrak{b}}(\pmb{\gamma}))} \sum_{k,l = 1}^{2^N} |M_{k,l}^{\mu}|^2 |e^{E_k} - e^{E_l}| =: \frac{\partial^2}{\partial \gamma_u^2} p_N(H_{\textrm{P}, \mathfrak{b}}(\pmb{\gamma})) + R,
		\end{align*}
		where the fraction $\frac{e^{E_k} - e^{E_l}}{E_k - E_l}$ is extended continuously for the case $E_k = E_l$. For the second line, we write $e^{E_k} + e^{E_l} = e^{(E_k+E_l)/2}(e^{(E_k-E_l)/2} + e^{(E_l-E_k)/2}$ and apply Lemma~\ref{lem:cosh}. The last line follows from Lemma~\ref{lem:iden}. The remainder term $R$ is estimated via Hölder's inequality 
		\begin{align*}
			R &= \frac{1}{2 Z_N(H_{\textrm{P}, \mathfrak{b}}(\pmb{\gamma}))} \sum_{k,l = 1}^{2^N} |M_{k,l}^{\mu}|^2 \frac{|e^{E_k} - e^{E_l}|}{|E_k - E_l|} |E_k - E_l| \\
			& \leq \left(\frac{1}{2Z_N(H_{\textrm{P}, \mathfrak{b}}(\pmb{\gamma}))} \sum_{k,l = 1}^{2^N} |M_{k,l}^{\mu}|^2 \frac{e^{E_k} - e^{E_l}}{E_k - E_l} \right)^{2/3} \left(\frac{1}{2 Z_N(H_{\textrm{P}, \mathfrak{b}}(\pmb{\gamma}))} \sum_{k,l = 1}^{2^N} |M_{k,l}^{\mu}|^2 (e^{E_k} + e^{E_l})|E_k - E_l|^2 \right)^{1/3} \\
			&= \left( \frac12 	\frac{\partial^2}{\partial \gamma_u^2} p_N(H_{\textrm{P}, \mathfrak{b}}(\pmb{\gamma}))  \right)^{2/3} \left( -	\langle [\mathbf{M}_\mu, H_{\textrm{P}, \mathfrak{b}}(\pmb{\gamma})]^2 \rangle_{H_{\textrm{P}, \mathfrak{b}}(\pmb{\gamma})}\right)^{1/3},
		\end{align*}
		where we used the identities \eqref{eq:speccomm} and \eqref{eq:specderiv} for the last line. As next step, we estimate the contribution of the commutator $[\mathbf{M}_\mu, H_{\textrm{P}, \mathfrak{b}}(\pmb{\gamma})]$. We employ a crude operator norm estimate
		\begin{align*}
			|\langle [\mathbf{M}_\mu, H_{\textrm{P}, \mathfrak{b}}(\pmb{\gamma})]^2 \rangle_{H_{\textrm{P}, \mathfrak{b}}(\pmb{\gamma})}| \leq \|[\mathbf{M}_\mu, H_{\textrm{P}, \mathfrak{b}}(\pmb{\gamma})] \|^2 \leq 3( \|[\mathbf{M}_\mu, H_{\textrm{P}}] \|^2 + \|[\mathbf{M}_\mu, H_{ \mathfrak{b}}] \|^2 + \|[\mathbf{M}_\mu, \sqrt{N} \langle \pmb{\gamma}, \mathbf{M} \rangle] \|^2),
		\end{align*}
		and estimate each term separately. Recall the canonical commutation relations of spin operators, i.e.,
		$[s_x,s_y] = s_z$ etc., which  in particular  imply the estimate $\|[\mathbf{M}_\mu, \mathbf{M}_\nu] |\ \leq \frac{2}{N}$. Hence, the following bounds are readily derived
		\begin{align*}
			\|[\mathbf{M}_\mu, H_{ \mathfrak{b}}] \|^2 \leq 2 \left( \frac1N \sum_{n = 1}^{N} |\mathbf{b}(n)| \right)^2, \qquad \|[\mathbf{M}_\mu, \sqrt{N} \langle \pmb{\gamma}, \mathbf{M} \rangle] \|^2) \leq \frac{2 |\pmb{\gamma}|^2}{N}.
		\end{align*}
		Bounding the commutator of the polynomial part is slightly more complicated. It is convenient to use again the decomposition $\textrm{P}(\mathbf{M}) = \sum_{k = 1}^{K} \alpha_k (\langle \mathbf{w}_k, \mathbf{M} \rangle )^{d_k}$ such that in view of the AM-GM inequality it is enough to consider a single monomial $\langle \mathbf{w}, \mathbf{M} \rangle )^{d}$. Using the relation $[\langle \mathbf{w}, \mathbf{M} \rangle )^{d}, \mathbf{M}_\mu] = \langle \mathbf{w}, \mathbf{M} \rangle [\langle \mathbf{w}, \mathbf{M} \rangle )^{d-1}, \mathbf{M}_\mu] + [\langle \mathbf{w}, \mathbf{M} \rangle )^{d-1}, \mathbf{M}_\mu] \langle \mathbf{w}, \mathbf{M} \rangle$, we deduce 
		$$  \|[\mathbf{M}_\mu, \langle \mathbf{w}, \mathbf{M} \rangle )^{d}] \| \leq \frac{C(d)}{N} $$
		with some constant $C(d)$ which only depends on the degree $d$- Combining the contribution of the monomials, we obtain 
		$$ \|[\mathbf{M}_\mu, H_{\textrm{P}}] \|^2  \leq C_P$$
		with some constant $C_P$. The last ingredient is the bound
		\begin{align*}
			\mathbb{E}_{\pmb{\gamma}}\left[\frac{\partial^2}{\partial \gamma_u^2} p_N(H_{\textrm{P}, \mathfrak{b}}(\pmb{\gamma})) \right] = \mathbb{E}_{\pmb{\gamma}}\left[\gamma_u \frac{\partial}{\partial \gamma_u} p_N(H_{\textrm{P}, \mathfrak{b}}(\pmb{\gamma})) \right] =   \mathbb{E}_{\pmb{\gamma}}\left[\sqrt{N} \gamma_u \langle \mathbf{M}_\mu \rangle_{H_{\textrm{P}, \mathfrak{b}}(\pmb{\gamma})} \right] \leq C \sqrt{N},
		\end{align*}
		where the first equality is due to Gaussian partial integration. We now combine all bounds, 
		\begin{align*}
			\mathbb{E}_{\pmb{\gamma}}\left[\langle \mathring{\mathbf{M}}^2_\mu \rangle_{H_{\textrm{P}, \mathfrak{b}}(\pmb{\gamma})} \right] &= \mathbb{E}_{\pmb{\gamma}}\left[\frac{\partial^2}{\partial \gamma_u^2} p_N(H_{\textrm{P}, \mathfrak{b}}(\pmb{\gamma})) \right] + \mathbb{E}_{\pmb{\gamma}}[R] \\
			&\leq \frac{C}{\sqrt{N}} + \mathbb{E}_{\pmb{\gamma}}\left[ \left( \frac12 	\frac{\partial^2}{\partial \gamma_u^2} p_N(H_{\textrm{P}, \mathfrak{b}}(\pmb{\gamma}))  \right)^{2/3} \left( -	\langle [\mathbf{M}_\mu, H_{\textrm{P}, \mathfrak{b}}(\pmb{\gamma})]^2 \rangle_{H_{\textrm{P}, \mathfrak{b}}(\pmb{\gamma})}\right)^{1/3} \right] \\
			& \leq \frac{C}{\sqrt{N}} + \left( \frac12 \mathbb{E}_{\pmb{\gamma}}\left[	\frac{\partial^2}{\partial \gamma_u^2} p_N(H_{\textrm{P}, \mathfrak{b}}(\pmb{\gamma})) \right] \right)^{2/3} \left( - \mathbb{E}_{\pmb{\gamma}}\left[	\langle [\mathbf{M}_\mu, H_{\textrm{P}, \mathfrak{b}}(\pmb{\gamma})]^2 \rangle_{H_{\textrm{P}, \mathfrak{b}}(\pmb{\gamma})}\right] \right)^{1/3}		\\
			& \leq \frac{C}{\sqrt{N}} + CN^{-1/3}\left(C_P + 6/N + 2 \left( \frac1N \sum_{n = 1}^{N} |\mathbf{b}(n)| \right)^2\right)^{1/3},
		\end{align*}
		where we used Hölder's inequality in the third line. After adding up all components, using the concavity of the function $x \mapsto x^{1/3}$ and redefining the constant $C_P$, the claimed bound follows.
	\end{proof}
	
	\subsection{Negative Definite Quadratic Models via Linearization}
	
	In this section, we restrict ourselves to quadratic Hamiltonians with symbol $\textrm{P}_{\pmb{\alpha}}(\mathbf{m}) = -( \alpha_x m_x^2 + \alpha_y m_y^2 + \alpha_z m_z^2)$ with nonnegative $\alpha_x, \alpha_y, \alpha_z \geq 0$. The total Hamiltonian has the form
	\begin{equation}\label{eq:quadr}
		H_{\textrm{P}_{\pmb{\alpha}}, \mathfrak{b}} = - \frac{4}{N} \left(\alpha_x \mathbf{S}_x^2 + \alpha_y \mathbf{S}_y^2 + \alpha_z \mathbf{S}_z^2 \right) + 2\sum_{n = 1}^{N} \langle \mathbf{b}(n), \mathbf{S}(n) \rangle.
	\end{equation}
	acting on the $N$-particle Hilbert space $\mathcal{H}_N$. Here, $\mathbf{S}$ denotes as before the total spin operator. The class of Hamiltonians in \eqref{eq:quadr} is a very special case of our general setting, but it turns out to be key to understand these special Hamiltonians first before turning to more general polynomials. Besides the quadratic nature, also the negative definiteness (i.e. $\pmb{\alpha} \geq 0$ componentwise) is crucial for the discussion in this section.
	
	The main idea is to approximate the thermodynamics of quadratic models by an "optimal" choice of a linear Hamiltonian $	L(\pmb{\gamma}, \pmb{m}, \{\mathbf{b}(n)\}_{n = 1}^{N})$,
	\begin{align}\label{eq:lin}
			L(\pmb{\alpha}, \pmb{m},& \{\mathbf{b}(n)\}_{n = 1}^{N}) := H_{\textrm{P}_{\pmb{\alpha}}, \mathfrak{b}} + N \big(\alpha_x (\mathbf{M}_x -m_x)^2 + \alpha_y (\mathbf{M}_y -m_y)^2 + \alpha_z (\mathbf{M}_z -m_z)^2 \big) \nonumber \\
			&= N(\alpha_x m_x^2 + \alpha_y m_y^2 + \alpha_z m_z^2) - 4 \left(\alpha_x m_x \mathbf{S}_x + \alpha_y m_y \mathbf{S}_y + \alpha_z m_z \mathbf{S}_z \right) + 2 \sum_{n = 1}^{N} \langle \mathbf{b}(n), \mathbf{S}(n) \rangle,
	\end{align}
	where $\mathbf{m} \in \rr^3$ is at first a free parameter, but as the notation suggest will eventually be chosen to essentially agree with thermal average $\langle \mathbf{M} \rangle_{H_{\textrm{P}_{\pmb{\alpha}}, \mathfrak{b}}}$. We record two important observations. 
	\begin{enumerate}
		\item Since we obtain $L(\pmb{\alpha}, \pmb{m}, \{\mathbf{b}(n)\}_{n = 1}^{N})$ by adding a positive definite operator to the original Hamiltonian $H_{\textrm{P}_{\pmb{\alpha}}, \mathfrak{b}}$, we have the operator inequality $ H_{\textrm{P}_{\pmb{\alpha}}, \mathfrak{b}}  \leq L(\pmb{\alpha}, \pmb{m},\{\mathbf{b}(n)\}_{n = 1}^{N}). $ This immediately implies the corresponding inequality for the pressure 
		\begin{equation}\label{eq:inpress}
			p_N(H_{\textrm{P}_{\pmb{\alpha}}, \mathfrak{b}}) \leq \inf_{\mathbf{m} \in B_{\rr^3}} p_N(L(\pmb{\alpha}, \pmb{m}, \{\mathbf{b}(n)\}_{n = 1}^{N})).
		\end{equation}
		Note that \eqref{eq:inpress} crucially depends on the fact that $\pmb{\alpha} \geq 0$ and the following discussion fails for more general quadratic Hamiltonians.
		\item On the other hand, $L(\pmb{\alpha}, \pmb{m}, \{\mathbf{b}(n)\}_{n = 1}^{N})$ is a linear Hamiltonian in the sense that it depends only linearly on the spin operators $\mathbf{S}(n)$. Hence, $L(\pmb{\alpha}, \pmb{m}, \{\mathbf{b}(n)\}_{n = 1}^{N})$  is exactly diagonalizable and its pressure is given by
		\begin{align}\label{eq:freeL}
			p_N(L(\pmb{\alpha}, \pmb{m}, &\{\mathbf{b}(n)\}_{n = 1}^{N})) := \alpha_x m_x^2 + \alpha_y m_y^2 + \alpha_z m_z^2) \nonumber \\ &+ \frac1N \sum_{n = 1}^{N} \log 2 \cosh \sqrt{(-2\alpha_x m_x + b_x(n))^2 + (-2\alpha_y m_y + b_y(n))^2 + (-2\alpha_z m_z + b_z(n))^2}
		\end{align}
		Employing the strong law of large numbers, we obtain the almost sure convergence 
		\begin{equation}\label{eq:pconv}
			\lim_{N \to \infty} p_N(L(\pmb{\alpha}, \pmb{m}, \{\mathbf{b}(n)\}_{n = 1}^{N})) = -\textrm{P}_{\pmb{\alpha}}(\mathbf{m}) + \Lambda_{\mathfrak{b}}(-2 \pmb{\alpha} \odot \pmb{m} )
		\end{equation}
		with the componentwise or Hadamard product $\odot$.
	\end{enumerate}
	As next step, we want to extend the almost sure convergence in \eqref{eq:pconv} to  $\inf_{\mathbf{m} \in B_{\rr^3}} p_N(L(\pmb{\alpha}, \pmb{m}, \{\mathbf{b}(n)\}_{n = 1}^{N}))$. This is content of the following
	\begin{lemma}\label{lem:conv}
		We have the almost sure convergence
		\begin{equation}
			\lim_{N \to \infty} \inf_{\mathbf{m} \in B_{\rr^3}} p_N(L(\pmb{\alpha}, \pmb{m}, \{\mathbf{b}(n)\}_{n = 1}^{N})) = \inf_{\mathbf{m} \in B_{\rr^3}} \bigg( -\textrm{P}_{\pmb{\alpha}}(\mathbf{m}) + \Lambda_{\mathfrak{b}}(-2 \pmb{\alpha} \odot \pmb{m} ) \bigg)
		\end{equation}
	\end{lemma}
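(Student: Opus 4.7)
The plan is to upgrade the pointwise almost sure convergence \eqref{eq:pconv} to uniform almost sure convergence on the compact set $B_{\rr^3}$; once that is in hand, the convergence of the infima is automatic since the infimum of a function on a compact set is a continuous functional under the sup-norm.

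First I would establish a uniform Lipschitz bound for the maps
\[
f_N(\mathbf{m}) := p_N(L(\pmb{\alpha}, \pmb{m}, \{\mathbf{b}(n)\}_{n = 1}^{N}))
\]
and the candidate limit $f(\mathbf{m}) := -\textrm{P}_{\pmb{\alpha}}(\mathbf{m}) + \Lambda_{\mathfrak{b}}(-2 \pmb{\alpha} \odot \pmb{m})$. Inspecting the explicit formula \eqref{eq:freeL}, the $\mathbf{m}$-dependence enters only through the smooth quadratic $\textrm{P}_{\pmb{\alpha}}(\mathbf{m})$ and through terms of the form $\log 2 \cosh |{-}2\pmb{\alpha}\odot \mathbf{m} + \mathbf{b}(n)|$. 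Since $x \mapsto \log 2 \cosh|x|$ is $1$-Lipschitz on $\rr^3$ (as shown in the proof of Lemma~\ref{lem:lambda}) and $\mathbf{m} \mapsto -2\pmb{\alpha}\odot \mathbf{m}$ is $2|\pmb{\alpha}|$-Lipschitz, each summand is Lipschitz with a constant depending only on $\pmb{\alpha}$, and hence so is their average. Combined with the Lipschitz bound on $\textrm{P}_{\pmb{\alpha}}$ over the compact ball, this gives a Lipschitz constant $L_{\pmb{\alpha}}$ for $f_N$ that is independent of both $N$ and the realization of the random field. The same argument applied to $\Lambda_{\mathfrak{b}}$ via Lemma~\ref{lem:lambda}(1) shows that $f$ is also $L_{\pmb{\alpha}}$-Lipschitz on $B_{\rr^3}$.

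Next I would use the Lipschitz bound together with pointwise convergence to obtain uniform convergence. Choose a countable dense subset $D \subset B_{\rr^3}$; by \eqref{eq:pconv}, for each $\mathbf{m} \in D$ there is a full-measure event $\Omega_{\mathbf{m}}$ on which $f_N(\mathbf{m}) \to f(\mathbf{m})$. The intersection $\Omega_0 := \bigcap_{\mathbf{m} \in D} \Omega_{\mathbf{m}}$ still has full probability. On $\Omega_0$, fix any $\epsilon > 0$ and choose a finite $\epsilon/(3L_{\pmb{\alpha}})$-net $\{\mathbf{m}_1, \dots, \mathbf{m}_K\} \subset D$ of $B_{\rr^3}$; for $N$ large enough, $|f_N(\mathbf{m}_k) - f(\mathbf{m}_k)| < \epsilon/3$ for all $k$. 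For any $\mathbf{m} \in B_{\rr^3}$, picking the nearest net point $\mathbf{m}_k$ and using the uniform Lipschitz bounds for both $f_N$ and $f$ gives $|f_N(\mathbf{m}) - f(\mathbf{m})| < \epsilon$, so $f_N \to f$ uniformly on $B_{\rr^3}$ almost surely.

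Finally, uniform convergence on a compact set implies convergence of infima: $|\inf f_N - \inf f| \le \|f_N - f\|_\infty \to 0$ almost surely, which is exactly the claim. The only nontrivial check in the whole argument is the uniform Lipschitz bound, and I do not expect any serious obstacle there since the $\mathbf{b}(n)$-dependence enters only additively inside the $1$-Lipschitz function $\log 2 \cosh |\cdot|$; in particular no moment assumption on $\mathfrak{b}$ beyond the integrability already used for $\Lambda_{\mathfrak{b}}$ is needed.
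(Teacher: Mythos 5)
Your proposal is correct and follows essentially the same route as the paper. The paper factors the argument through an auxiliary Lemma~\ref{lem:convv} (pointwise a.s. convergence plus a uniform Lipschitz bound on a compact set implies a.s. convergence of infima, proved by a finite $\varepsilon$-net argument), and verifies the Lipschitz hypothesis via an operator-norm bound on $L(\pmb{\alpha},\mathbf{m},\cdot)-L(\pmb{\alpha},\mathbf{m}',\cdot)$ rather than by inspecting \eqref{eq:freeL} term by term as you do; both yield the same $\pmb{\alpha}$-dependent constant. The only cosmetic difference is that you pass through uniform convergence on $B_{\rr^3}$ (via a countable dense set, which also cleanly handles the intersection of full-measure events) before taking infima, whereas the paper bounds the difference of the infima directly by the net approximation; the two are equivalent, and your version is arguably slightly more explicit about why a single full-measure event suffices for all $\varepsilon$.
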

	
	The proof of Lemma~\ref{lem:conv} exploits the uniform Lipschitz continuity in $\textbf{m}$. The straightforward argument is deferred to the appendix. We continue with our analysis of $	p_N(H_{\textrm{P}_{\pmb{\alpha}}, \mathfrak{b}})$. We want to show that we the bound from \eqref{eq:inpress} becomes an equality in the thermodynamical limit. To this end we employ a version of the Boguliobov inequality, namely for any self-adjoint matrices $H$ and $A$ on $\mathcal{H}_N$
	\begin{equation}\label{eq:bogu}
		p_N(H) \leq p_N(H+A) - \frac1N \langle A \rangle_{H}.
	\end{equation}
	In order to employ the Boguliobov inequality \eqref{eq:bogu} and Proposition~\ref{prop:fluc}, we perturb by a random Gaussian field Hamiltonian and obtain
	\begin{align*}
		 \inf_{\mathbf{m} \in B_{\rr^3}} p_N(L(\pmb{\alpha}, \pmb{m}, &\{\mathbf{b}(n)\}_{n = 1}^{N})) - p_N(H_{\textrm{P}_{\pmb{\alpha}}, \mathfrak{b}}) \\ &\leq \frac{C}{\sqrt{N}} + \mathbb{E}{\pmb{\gamma}} \left[\inf_{\mathbf{m} \in B_{\rr^3}} p_N(L(\pmb{\alpha}, \pmb{m}, \{\mathbf{b}(n) + N^{-1/2} \pmb{\gamma} \}_{n = 1}^{N})) - p_N(H_{\textrm{P}_{\pmb{\alpha}}, \mathfrak{b}}(\pmb{\gamma})) \right] \\
		 & \leq \frac{C}{\sqrt{N}} + \mathbb{E}{\pmb{\gamma}} \left[ p_N(L(\pmb{\alpha}, \langle \pmb{M} \rangle_{H_{\textrm{P}, \mathfrak{b}}(\pmb{\gamma})}, \{\mathbf{b}(n) + N^{-1/2} \pmb{\gamma} \}_{n = 1}^{N})) - p_N(H_{\textrm{P}_{\pmb{\alpha}}, \mathfrak{b}}(\pmb{\gamma})) \right] \\
		 & \leq \frac{C}{\sqrt{N}} + \sum_{\mu = x,y,z} \alpha_{\mu} \mathbb{E}_{\pmb{\gamma}}[\langle \mathring{\mathbf{M}}^2_\mu \rangle_{H_{\textrm{P}, \mathfrak{b}}(\pmb{\gamma})}]
		  \leq \frac{C}{\sqrt{N}} + \max_{\mu = x,y,z} \alpha_\mu \left( \leq  ( C_P + 6 \, \overline{\mathbf{b}}^{2/3}) N^{-1/3} \right).
	\end{align*}
	The first bound is due to Lemma~\ref{lem:gauss}, the second inequality is immediate and the third line follows from \eqref{eq:bogu} and Proposition~\ref{prop:fluc}. The last expression converges almost surely to zero and together with Lemma~\ref{lem:conv} we arrive at the following result.
	
	\begin{proposition}\label{prop:quadr}
		Let $\textrm{P}_{\pmb{\alpha}}(\mathbf{m}) = -( \alpha_x m_x^2 + \alpha_y m_y^2 + \alpha_z m_z^2)$ be a quadratic polynomial with $\pmb{\alpha} \geq 0$.
		Then,
		\begin{equation}\label{eq:quad}
				\lim_{N \to \infty} p_N(H_{\textrm{P}_{\pmb{\alpha}}, \mathfrak{b}}) = \inf_{\mathbf{m} \in B_{\rr^3}} \bigg( -\textrm{P}_{\pmb{\alpha}}(\mathbf{m}) + \Lambda_{\mathfrak{b}}(-2 \pmb{\alpha} \odot \pmb{m} ) \bigg) \quad \text{a.s.}
			\end{equation}
	\end{proposition}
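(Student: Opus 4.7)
The plan is to squeeze $p_N(H_{\textrm{P}_{\pmb{\alpha}}, \mathfrak{b}})$ between matching upper and lower bounds, using the family of linear comparator Hamiltonians $L(\pmb{\alpha},\mathbf{m},\{\mathbf{b}(n)\})$ built in \eqref{eq:lin}. The structural input is the operator identity $L - H_{\textrm{P}_{\pmb{\alpha}},\mathfrak{b}} = N\sum_\mu\alpha_\mu (\mathbf{M}_\mu-m_\mu)^2 \geq 0$, which holds for every $\mathbf{m}$ thanks to $\pmb{\alpha}\geq 0$. This quadratic-plus-negative-definite combination is what turns the approximation into a one-sided operator inequality rather than a saddle-point comparison.

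\medskip

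For the $\limsup$ direction I would argue as follows. Since $H_{\textrm{P}_{\pmb{\alpha}},\mathfrak{b}} \leq L$ as self-adjoint operators, monotonicity of $A\mapsto \log \tr e^A$ together with \eqref{eq:inpress} gives $p_N(H_{\textrm{P}_{\pmb{\alpha}},\mathfrak{b}}) \leq \inf_{\mathbf{m}} p_N(L(\pmb{\alpha},\mathbf{m},\{\mathbf{b}(n)\}))$ for every realization of $\mathfrak{b}$. Applying Lemma~\ref{lem:conv} to the right-hand side yields almost surely
\begin{equation*}
\limsup_{N\to\infty} p_N(H_{\textrm{P}_{\pmb{\alpha}},\mathfrak{b}}) \;\leq\; \inf_{\mathbf{m}\in B_{\rr^3}}\bigl(-\textrm{P}_{\pmb{\alpha}}(\mathbf{m}) + \Lambda_{\mathfrak{b}}(-2\pmb{\alpha}\odot\mathbf{m})\bigr).
\end{equation*}

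\medskip

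For the matching $\liminf$ I would show that the deterministic gap $\inf_{\mathbf{m}} p_N(L) - p_N(H_{\textrm{P}_{\pmb{\alpha}},\mathfrak{b}})$ vanishes in the limit. First, Lemma~\ref{lem:gauss} lets me replace $p_N(H_{\textrm{P}_{\pmb{\alpha}},\mathfrak{b}})$ by $\mathbb{E}_{\pmb{\gamma}}[p_N(H_{\textrm{P}_{\pmb{\alpha}},\mathfrak{b}}(\pmb{\gamma}))]$ at cost $O(N^{-1/2})$; observe that $\sqrt{N}\langle\pmb{\gamma},\mathbf{M}\rangle$ just shifts each random field to $\mathbf{b}(n)+N^{-1/2}\pmb{\gamma}$, and the corresponding shift in $\inf_\mathbf{m} p_N(L)$ is of the same order by the Lipschitz continuity underlying Lemma~\ref{lem:conv}. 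Next, for each $\pmb{\gamma}$ I would pick the trial magnetization $\mathbf{m}^\ast = \langle \mathbf{M}\rangle_{H_{\textrm{P}_{\pmb{\alpha}},\mathfrak{b}}(\pmb{\gamma})}$ and apply the Bogoliubov inequality \eqref{eq:bogu} to the pair $(H_{\textrm{P}_{\pmb{\alpha}},\mathfrak{b}}(\pmb{\gamma}), L)$; this reduces the remaining gap to the thermal variance
\begin{equation*}
\tfrac{1}{N}\bigl\langle L - H_{\textrm{P}_{\pmb{\alpha}},\mathfrak{b}}(\pmb{\gamma})\bigr\rangle \;=\; \sum_{\mu=x,y,z}\alpha_\mu\,\bigl\langle \mathring{\mathbf{M}}_\mu^2\bigr\rangle_{H_{\textrm{P}_{\pmb{\alpha}},\mathfrak{b}}(\pmb{\gamma})}.
\end{equation*}
Taking $\pmb{\gamma}$-expectation and invoking Proposition~\ref{prop:fluc} bounds this by $\max_\mu \alpha_\mu\cdot(C_P + 6\,\overline{\mathbf{b}}^{2/3})\,N^{-1/3}$, which tends to zero almost surely since $\overline{\mathbf{b}}\to \mathbb{E}[|\mathfrak{b}|]$ by the strong law of large numbers. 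Chaining these estimates with Lemma~\ref{lem:conv} yields the matching $\liminf$ almost surely.

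\medskip

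The main obstacle is exactly the closing of this Bogoliubov gap. Without additional input, $\mathbf{M}$ may fail to concentrate in the low-temperature regime because the limiting Gibbs state can decompose into several pure states, so no deterministic trial $\mathbf{m}$ is guaranteed to make the gap small a priori. The role of the Gaussian perturbation $\sqrt{N}\langle\pmb{\gamma},\mathbf{M}\rangle$ is to break this degeneracy by selecting a random pure state, while being too weak to shift the pressure to leading order, and Proposition~\ref{prop:fluc} is precisely the quantitative concentration statement that closes the argument. The negative-definiteness $\pmb{\alpha}\geq 0$ is equally essential: it is what elevates the linearization identity to a one-sided operator inequality $H\leq L$, turning the identification of the pressure into a free infimum over $\mathbf{m}$ rather than a saddle-point problem that linearization alone cannot resolve.
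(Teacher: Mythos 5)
Your proposal is correct and follows essentially the same route as the paper: the one-sided operator inequality $H_{\textrm{P}_{\pmb{\alpha}},\mathfrak{b}} \leq L$ (valid because $\pmb{\alpha}\geq 0$) combined with Lemma~\ref{lem:conv} gives the upper bound, and the matching lower bound is obtained by perturbing with a Gaussian field via Lemma~\ref{lem:gauss}, choosing the trial magnetization $\mathbf{m}^\ast=\langle\mathbf{M}\rangle_{H_{\textrm{P}_{\pmb{\alpha}},\mathfrak{b}}(\pmb{\gamma})}$, closing the Bogoliubov gap with Proposition~\ref{prop:fluc}, and invoking the strong law of large numbers for $\overline{\mathbf{b}}$. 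The only cosmetic difference is that the $O(N^{-1/2})$ shift in $\inf_{\mathbf{m}} p_N(L)$ under the Gaussian perturbation is really a direct operator-norm bound (as in the proof of Lemma~\ref{lem:gauss}) rather than the Lipschitz-in-$\mathbf{m}$ argument underlying Lemma~\ref{lem:conv}, but the conclusion is the same.
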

	Note that the variational expression \eqref{eq:quad} is not identical with the formula \eqref{eq:main} of the main theorem. This apparent conflict can be resolved by invoking the Toland-Singer duality \cite{BC17}. This is only applicable since $\textrm{P}_{\pmb{\alpha}}$ is a concave function and hence one cannot expect an analogue of \eqref{eq:quad} for more general polynomials $\textrm{P}$. For future reference, we record the dual identity (i.e. \eqref{eq:maindetermin} in this special case)
	\begin{equation}\label{eq:quadr1}
		\lim_{N \to \infty} p_N(H_{\textrm{P}_{\pmb{\alpha}}, \mathfrak{b}}) = \sup_{\mathbf{m} \in B_{\rr^3}} \bigg( \textrm{P}_{\pmb{\alpha}}(\mathbf{m}) - \Lambda_{\mathfrak{b}}^{*}( \pmb{m} ) \bigg) \quad \text{a.s.}
	\end{equation}
	
	\subsection{Upper Bound via a Microcanonical Analysis}\label{sec:upper}
	
	The main goal of this Section to complement Proposition~\ref{prop:lower} by the corresponding sharp upper bound.
	\begin{proposition}\label{prop:upper}
		Let $\textrm{P} \colon \rr^3 \to \rr $ be a polynomial, $\mathfrak{b}$ an integrable vector valued random variable and $H_{\textrm{P},\mathfrak{b}}$ the associated random-field Hamiltonian(s) with specific pressures $p_N(\textrm{P},\mathfrak{b})$. Then, almost surely
		\begin{equation}\label{eq:upper}
			\sup_{\mathbf{m} \in B_{\rr^3}} \left(\textrm{P}(\mathbf{m}) - \Lambda^{*}_\mathfrak{b}(\mathbf{m}) \right) \geq 	\limsup_{N \to \infty} p_N(\textrm{P},\mathfrak{b})
		\end{equation}
	\end{proposition}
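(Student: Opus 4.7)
The strategy is a \emph{microcanonical majorization}: I dominate $\textrm{P}$ pointwise on $B_{\rr^3}$ by a concave paraboloid $Q_{\mathbf{m}^*}$ tangent to $\textrm{P}$ at a maximizer $\mathbf{m}^*$ of $\textrm{P}-\Lambda^*_{\mathfrak{b}}$, lift this to an operator inequality on each spin block, and then invoke the already-established quadratic result, Proposition~\ref{prop:quadr}. By continuity and compactness a maximizer $\mathbf{m}^* \in B_{\rr^3}$ exists; I take it interior, handling the boundary case at the end either via a small perturbation $\textrm{P}_\epsilon(\mathbf{m}) := \textrm{P}(\mathbf{m}) - \epsilon |\mathbf{m}|^{2k}$ with $k$ large, whose effect on $p_N$ is uniformly $\mathcal{O}(\epsilon)$ because $\|\mathbf{M}\|\leq 1$, or by subgradient/normal-cone reasoning. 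For $c>0$ large enough, which exists by a second-order Taylor estimate on the compact set $B_{\rr^3}$, the concave tangent paraboloid
\begin{equation*}
Q_{\mathbf{m}^*}(\mathbf{x}) := \textrm{P}(\mathbf{m}^*) + \nabla \textrm{P}(\mathbf{m}^*)\cdot(\mathbf{x}-\mathbf{m}^*) - c\,|\mathbf{x}-\mathbf{m}^*|^2
\end{equation*}
satisfies $\textrm{P}(\mathbf{x}) \leq Q_{\mathbf{m}^*}(\mathbf{x})$ on $B_{\rr^3}$ with matching value and gradient at $\mathbf{m}^*$. Crucially, the quadratic part $-c|\mathbf{x}-\mathbf{m}^*|^2$ is \emph{diagonal}, so after absorbing the linear piece into a deterministic shift $\mathfrak{b}' := \mathfrak{b} + \nabla \textrm{P}(\mathbf{m}^*) + 2c\mathbf{m}^*$ of the random field, the Hamiltonian $H_{Q_{\mathbf{m}^*},\mathfrak{b}}$ coincides, up to an additive constant $N \cdot \kappa$ with $\kappa := \textrm{P}(\mathbf{m}^*) - \nabla \textrm{P}(\mathbf{m}^*)\cdot \mathbf{m}^* - c|\mathbf{m}^*|^2$, with $H_{\textrm{P}_{\pmb{\alpha}},\mathfrak{b}'}$ for $\pmb{\alpha}=(c,c,c)$, hence lies precisely in the class covered by Proposition~\ref{prop:quadr}.

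I then lift the pointwise bound to an operator bound using the Duffield-type upper-symbol representation~\eqref{eq:uppersymbclasss}. On each block $\mathbb{C}^{2J+1}$ both $N\textrm{P}(\mathbf{M})$ and $NQ_{\mathbf{m}^*}(\mathbf{M})$ agree, up to $\mathcal{O}(1)$ in operator norm and uniformly in $J$, with their coherent-state upper-symbol operators $\tfrac{2J+1}{4\pi}\int N\textrm{P}(\tfrac{2J}{N}\mathbf{e}(\Omega))\ketbra{\Omega,J}{\Omega,J}\,d\Omega$ and the analogous expression for $Q_{\mathbf{m}^*}$. Since $|\tfrac{2J}{N}\mathbf{e}(\Omega)|\leq 1$ and the rank-one projectors are positive, the pointwise bound $\textrm{P}\leq Q_{\mathbf{m}^*}$ orders these positive-operator-valued integrals, producing the operator inequality $H_{\textrm{P},\mathfrak{b}} \leq H_{Q_{\mathbf{m}^*},\mathfrak{b}} + \mathcal{O}(1)\cdot\mathbbm{1}$ on $\mathcal{H}_N$ and consequently $p_N(\textrm{P},\mathfrak{b}) \leq p_N(H_{Q_{\mathbf{m}^*},\mathfrak{b}}) + \mathcal{O}(N^{-1})$. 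Applying Proposition~\ref{prop:quadr} in the sup-dual form~\eqref{eq:quadr1} to $H_{\textrm{P}_{\pmb{\alpha}},\mathfrak{b}'}$ and using the elementary identity $\Lambda^*_{\mathfrak{b}'}(\mathbf{m}) = \Lambda^*_{\mathfrak{b}}(\mathbf{m}) - \mathbf{m}\cdot(\nabla \textrm{P}(\mathbf{m}^*) + 2c\mathbf{m}^*)$, the constant $\kappa$ and the field shift recombine to yield almost surely
\begin{equation*}
\limsup_{N\to\infty} p_N(\textrm{P},\mathfrak{b}) \leq \sup_{\mathbf{m}\in B_{\rr^3}} \bigl( Q_{\mathbf{m}^*}(\mathbf{m}) - \Lambda^*_{\mathfrak{b}}(\mathbf{m}) \bigr).
\end{equation*}

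To close the argument I would identify this supremum with $\textrm{P}(\mathbf{m}^*) - \Lambda^*_{\mathfrak{b}}(\mathbf{m}^*) = \sup(\textrm{P} - \Lambda^*_{\mathfrak{b}})$. Since $Q_{\mathbf{m}^*}$ has Hessian $-2cI$ and $\Lambda^*_{\mathfrak{b}}$ is convex, $Q_{\mathbf{m}^*} - \Lambda^*_{\mathfrak{b}}$ is strictly concave, so its constrained maximizer on $B_{\rr^3}$ is unique. Differentiability of $\Lambda^*_{\mathfrak{b}}$ at interior $\mathbf{m}^*$ (Lemma~\ref{lem:lambda}(2)) together with interior optimality of $\mathbf{m}^*$ for $\textrm{P} - \Lambda^*_{\mathfrak{b}}$ gives $\nabla \textrm{P}(\mathbf{m}^*) = \nabla \Lambda^*_{\mathfrak{b}}(\mathbf{m}^*)$, and by construction $\nabla Q_{\mathbf{m}^*}(\mathbf{m}^*) = \nabla \textrm{P}(\mathbf{m}^*)$, so $\mathbf{m}^*$ is an interior critical point, hence the unique global maximizer of $Q_{\mathbf{m}^*} - \Lambda^*_{\mathfrak{b}}$, with value exactly $\textrm{P}(\mathbf{m}^*) - \Lambda^*_{\mathfrak{b}}(\mathbf{m}^*)$. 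The main technical obstacle I anticipate is the operator-lift step: one must check that the $\mathcal{O}(N^{-1})$ error in~\eqref{eq:uppersymbclasss} is uniform across spin sectors $(J,\alpha)$ \emph{both} for $\textrm{P}$ and the chosen majorant $Q_{\mathbf{m}^*}$, so that the $\mathcal{O}(1)$ error after multiplying by $N$ does not survive division by $N$ in the pressure; the secondary subtlety is choosing $c$ uniformly large enough that $\textrm{P} \leq Q_{\mathbf{m}^*}$ on all of $B_{\rr^3}$, which amounts to polynomial bookkeeping only.
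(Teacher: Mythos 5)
Your proof is correct in its essentials, and it takes a genuinely different route from the paper. The paper's argument goes: first perturb $H_{\textrm{P},\mathfrak{b}}$ by the random Gaussian field, invoke the Bogoliubov inequality together with the thermal-fluctuation bound of Proposition~\ref{prop:fluc} to control $p_N(H_{\textrm{P},\mathfrak{b}}) - \sup_{\mathbf{m}} p_N(H_{\textrm{P},\mathfrak{b}} - 2\alpha Q_{\mathbf{m}})$; then insert the \emph{uniform-in-$\mathbf{m}$} operator inequality of Lemma~\ref{lem:micro} (whose error $g_P(\alpha)$ vanishes as $\alpha\to\infty$), reduce to the negative-definite quadratic model $H_{\mathfrak{b}} - \alpha Q_{\mathbf{m}}$, apply Proposition~\ref{prop:quadr}, and finally send $\alpha\to\infty$. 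In contrast, you fix a single tangent paraboloid $Q_{\mathbf{m}^*}$ at the maximizer $\mathbf{m}^*$ of $\textrm{P}-\Lambda^*_{\mathfrak{b}}$, lift the pointwise majorization to an operator inequality via the Duffield-type bound~\eqref{eq:uppersymbclasss} and positivity of the coherent-state resolution, and observe that $H_{Q_{\mathbf{m}^*},\mathfrak{b}}$ is \emph{exactly} of the form handled by Proposition~\ref{prop:quadr} after a deterministic shift of the random field; the convexity/tangency identity $\sup_{\mathbf{m}}(Q_{\mathbf{m}^*}-\Lambda^*_{\mathfrak{b}}) = \textrm{P}(\mathbf{m}^*)-\Lambda^*_{\mathfrak{b}}(\mathbf{m}^*)$ then closes the proof. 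This is arguably slicker: it uses the thermal-fluctuation machinery only once (inside Proposition~\ref{prop:quadr}) rather than twice, and it avoids the auxiliary $\sup_{\mathbf{m}}$ and the $\alpha\to\infty$ limit at the outer level. The trade-off is that your lift to an operator inequality needs the coherent-state upper-symbol estimate (uniform in $J$), whereas Lemma~\ref{lem:micro} is proved by a purely algebraic monomial decomposition with no semiclassical input at that step.

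Two points deserve sharpening. First, your operator-comparison step works block-by-block for the mean-field part and then adds $H_{\mathfrak{b}}$ to both sides; this is fine, but you should note explicitly that the block inequality extends to all of $\mathcal{H}_N$ before $H_{\mathfrak{b}}$ (which does not respect the total-spin sectors) is added. Second, the interior-maximizer assumption is in fact automatic and does not require the perturbation $\textrm{P}_\epsilon$: since $|\nabla\Lambda_{\mathfrak{b}}(\mathbf{h})| = \left|\mathbb{E}\left[\tfrac{\mathbf{h}+\mathfrak{b}}{|\mathbf{h}+\mathfrak{b}|}\tanh|\mathbf{h}+\mathfrak{b}|\right]\right| < 1$ for every finite $\mathbf{h}$, the dual variable $\mathbf{h}(\mathbf{m})$ of Lemma~\ref{lem:lambda}(2) diverges as $|\mathbf{m}|\to 1$, so the radial derivative of $\Lambda^*_{\mathfrak{b}}$ blows up at $\partial B_{\rr^3}$ and a boundary maximum of the Lipschitz function $\textrm{P}$ minus $\Lambda^*_{\mathfrak{b}}$ is impossible. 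The suggested perturbation $\textrm{P}_\epsilon = \textrm{P} - \epsilon|\mathbf{m}|^{2k}$ is thus unnecessary; moreover, as written it would not obviously force the maximizer off the boundary without this same gradient-blowup observation, so it is better to state the observation directly.
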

	
	As announced, we proceed by a variant of a microcanonical analysis. For classical systems, the strategy is very simple: one considers configurations with magnetization $\pmb{M} \simeq \pmb{m}$ and via a large-deviation analysis one is typically able to compute the restricted free energy. In the non-commutative setting there are different non-equivalent methods to restrict a Hamiltonian to a certain part of the phase space. The most intuitive approach via orthogonal projections is problematic as it requires good control of the Hamiltonian. It turns out that the most elegant strategy is to perturb the Hamiltonian by a potential which is very negative if $\pmb{M} \not\simeq \pmb{m}$. Our starting point is the
	following simple observation.
	
	\begin{lemma}\label{lem:micro}
			Let $\textrm{P} \colon \rr^3 \to \rr $ be a polynomial. Then there exists a function $g_{\textrm{P} } : [0,\infty] \to [0, \infty]$ such that $\lim_{\alpha \to \infty} g(\alpha) = 0$
			and 
			\begin{equation}\label{eq:micro}
				\textrm{P}(\mathbf{M}) \leq (	\textrm{P}(\mathbf{m}) + g_P(\alpha)) \mathbbm{1} + \alpha \left((\mathbf{M}_x - \mathbf{m}_x)^2 + (\mathbf{M}_y - \mathbf{m}_y)^2 + (\mathbf{M}_z - \mathbf{m}_z)^2
				\right)
			\end{equation}
			holds as operator inequality for all $\mathbf{m} \in B_{\rr^3}$ and $N \in \nn$.
	\end{lemma}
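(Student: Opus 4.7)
The plan is to prove \eqref{eq:micro} monomial by monomial using the same algebraic decomposition invoked in the proof of Proposition~\ref{prop:lower}. By \cite[Lemma A.6]{MW23}, one has $\textrm{P}(\mathbf{x}) = \sum_{k=1}^{K} \alpha_k \langle \mathbf{w}_k, \mathbf{x}\rangle^{d_k}$ with unit vectors $\mathbf{w}_k \in \rr^3$, real coefficients $\alpha_k$ and integers $d_k \geq 1$, and this identity lifts to operators, $\textrm{P}(\mathbf{M}) = \sum_k \alpha_k \langle \mathbf{w}_k, \mathbf{M}\rangle^{d_k}$, since each directional power $\langle \mathbf{w}_k, \mathbf{M}\rangle^{d_k}$ is already self-adjoint and hence coincides with its Weyl symmetrization. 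Therefore it is enough to find, for each fixed unit $\mathbf{w}$ and integer $d\geq 1$, an operator inequality of the form
\begin{equation*}
	\pm \Big( \langle \mathbf{w}, \mathbf{M}\rangle^d - \langle \mathbf{w}, \mathbf{m}\rangle^d \mathbbm{1} \Big) \;\leq\; \tilde g_d(\alpha)\,\mathbbm{1} \;+\; \alpha \sum_{\mu} (\mathbf{M}_\mu - m_\mu)^2 ,
\end{equation*}
uniform in $\mathbf{m}\in B_{\rr^3}$ and in $N$, with $\tilde g_d(\alpha)\to 0$ as $\alpha\to\infty$.

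To prove this single-monomial bound I would set $X := \langle \mathbf{w}, \mathbf{M}\rangle$, $b := \langle \mathbf{w}, \mathbf{m}\rangle$ and $C := X - b\mathbbm{1}$, so that $\|X\|\leq 1$, $|b|\leq 1$ and $\|C\|\leq 2$. Because $b$ commutes with $C$, the scalar binomial formula yields
\begin{equation*}
	X^d - b^d \mathbbm{1} \;=\; \sum_{j=1}^{d} \binom{d}{j}\, b^{d-j}\, C^j ,
\end{equation*}
reducing the task to bounding the operator powers $\pm C^j$. For the linear term $j=1$ the elementary operator Young inequality $\pm C \leq \alpha' C^2 + \tfrac{1}{4\alpha'}\mathbbm{1}$ (valid for any $\alpha' > 0$) applies. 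For $j\geq 2$, single-operator functional calculus on $C$ gives $\pm C^j \leq \|C\|^{j-2} C^2 \leq 2^{j-2} C^2$: indeed the scalar bound $\pm x^j \leq \|C\|^{j-2} x^2$ holds on $\sigma(C) \subset [-\|C\|, \|C\|]$. Plugging these in yields $\pm(X^d - b^d\mathbbm{1}) \leq A_d(\alpha')\,\mathbbm{1} + B_d(\alpha')\,C^2$ with $A_d(\alpha') = O(1/\alpha')$ and $B_d(\alpha') = O(\alpha')$, uniformly in $\mathbf{m}\in B_{\rr^3}$.

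It remains to pass from $C^2 = \langle \mathbf{w}, \mathbf{M}-\mathbf{m}\rangle^2$ back to the quadratic form $\mathbf{N} := \sum_\mu (\mathbf{M}_\mu - m_\mu)^2$. Writing $C = \sum_\mu w_\mu D_\mu$ with $D_\mu := \mathbf{M}_\mu - m_\mu \mathbbm{1}$ and expanding the square, the diagonal contributions $w_\mu^2 D_\mu^2$ are dominated by $\sum_\mu D_\mu^2$, while the cross terms $w_\mu w_\nu(D_\mu D_\nu + D_\nu D_\mu)$ are handled by the symmetric operator AM--GM inequality $\pm(D_\mu D_\nu + D_\nu D_\mu) \leq D_\mu^2 + D_\nu^2$. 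This produces a universal constant $c$ with $C^2 \leq c\,\mathbf{N}$. Summing the monomial estimates with a common parameter $\alpha'$ and re-parametrising $\alpha := c\sum_k |\alpha_k| B_{d_k}(\alpha')$ turns the collected inequality into \eqref{eq:micro}, and one reads off $g_P(\alpha) = O(1/\alpha) \to 0$; all constants depend only on the coefficients of $\textrm{P}$, which is exactly the required uniformity in $\mathbf{m}$ and $N$.

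The main obstacle is to navigate the non-commutativity of $\mathbf{M}_x,\mathbf{M}_y,\mathbf{M}_z$: a naive Taylor expansion of $\textrm{P}(\mathbf{M})$ around $\mathbf{m}\mathbbm{1}$ would produce non-self-adjoint, sign-indefinite remainders that resist direct domination by the positive operator $\mathbf{N}$. The directional decomposition bypasses this by relegating the bulk of the estimate to functional calculus for a \emph{single} self-adjoint operator $X$, which is entirely commutative; the only genuinely non-commutative step is the final passage from $C^2$ to $\mathbf{N}$, and there the symmetrized AM--GM estimate controls the cross terms cleanly.
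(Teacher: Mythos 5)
Your proof is correct and shares the same skeleton as the paper's: decompose $\textrm{P}$ into directional monomials $\langle\mathbf{w}_k,\cdot\rangle^{d_k}$ via \cite[Lemma~A.6]{MW23}, reduce to one monomial, control $(\langle\mathbf{w},\mathbf{M}\rangle-\langle\mathbf{w},\mathbf{m}\rangle)^2$ by the isotropic form $\mathbf{N}=\sum_\mu(\mathbf{M}_\mu-m_\mu)^2$, and finish by spectral calculus for the single self-adjoint operator $\langle\mathbf{w},\mathbf{M}\rangle$. The two places where you diverge from the paper are purely technical. For the scalar monomial estimate, the paper Taylor-expands $x^d$ around $x_0$ with Lagrange remainder, absorbs the second-order term into $\alpha(x-x_0)^2$ and defines $g(\alpha)$ as the supremum over $[-1,1]^2$ of the leftover linear piece minus $(\alpha-C)(x-x_0)^2$; you instead use the exact binomial expansion of $(C+b)^d$, Young's inequality for the $j=1$ term and the crude power bound $\pm C^j\leq\|C\|^{j-2}C^2$ for $j\geq2$. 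Both give $g_P(\alpha)=O(1/\alpha)$. For the passage from $C^2$ to $\mathbf{N}$, you expand the square and tame the non-commuting cross terms by the symmetrized AM--GM bound $\pm(D_\mu D_\nu+D_\nu D_\mu)\leq D_\mu^2+D_\nu^2$, yielding $C^2\leq c\,\mathbf{N}$ with $c>1$; the paper uses the slicker observation that $\mathbf{N}$ is frame-independent (because $\sum_\mu\mathbf{M}_\mu^2=(2/N)^2\mathbf{S}^2$ is rotationally invariant), so picking an orthonormal frame with $\mathbf{e}_1=\mathbf{w}$ gives the exact identity $\mathbf{N}=\sum_i(\langle\mathbf{e}_i,\mathbf{M}\rangle-\langle\mathbf{e}_i,\mathbf{m}\rangle)^2$ and hence $C^2\leq\mathbf{N}$ with constant one. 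Your version is a bit heavier and gives worse constants, but since constants are irrelevant here both routes are fine. One small point of hygiene in your final reparametrization: $A_d,B_d$ depend on $|b|\leq1$; to get a genuinely $\mathbf{m}$-independent $\alpha\mapsto g_P(\alpha)$ you should replace them by their suprema over $|b|\leq1$ before inverting $\alpha'\mapsto\alpha$, and set $g_P(\alpha)=+\infty$ for the finitely many small $\alpha$ not reached, which the statement's codomain $[0,\infty]$ permits.
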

	
	\begin{proof}
		We use yet again the decomposition  $\textrm{P}(\mathbf{M}) =  \sum_{k = 1}^{K} \beta_k (\langle \mathbf{w}_k, \mathbf{M} \rangle )^{d_k}$ from which it follows that is enough to prove \eqref{eq:micro} for an arbitrary monomial $ \beta \langle \mathbf{w}, \mathbf{M} \rangle^{d}$. We keep the factor of $\beta$ to cover both possible signs. We claim that it is enough to show 
		\begin{equation}\label{eq:lemmicro}
		\beta	\langle \mathbf{w}, \mathbf{M} \rangle^{d} \leq \ (\beta \langle \mathbf{w}, \mathbf{m} \rangle^{d} +g(\alpha)) \mathbbm{1} + \alpha (\langle \mathbf{w}, \mathbf{M} \rangle - \langle \mathbf{w}, \mathbf{m} \rangle )^2 
		\end{equation}
		uniformly in $\mathbf{m}, N$ for a function $g$ with the claimed properties. Indeed, we have the operator identity
		$$ (\mathbf{M}_x - \mathbf{m}_x)^2 + (\mathbf{M}_y - \mathbf{m}_y)^2 + (\mathbf{M}_z - \mathbf{m}_z)^2 = (\langle \mathbf{e}_1, \mathbf{M} \rangle  - \langle \mathbf{e}_1, \mathbf{m} \rangle)^2 + (\langle \mathbf{e}_2, \mathbf{M} \rangle  - \langle \mathbf{e}_2, \mathbf{m} \rangle)^2 + (\langle \mathbf{e}_3, \mathbf{M} \rangle  - \langle \mathbf{e}_3, \mathbf{m} \rangle)^2 $$
		for any orthonormal basis $\mathbf{e}_1, \mathbf{e}_2,\mathbf{e}_3$ in $\rr^3$, which follows by expanding the squares and exploiting the rotational invariance of the square of the total spin vector. Hence, $ (\langle \mathbf{w}, \mathbf{M} \rangle - \langle \mathbf{w}, \mathbf{m} \rangle )^2 \leq (\mathbf{M}_x - \mathbf{m}_x)^2 + (\mathbf{M}_y - \mathbf{m}_y)^2 + (\mathbf{M}_z - \mathbf{m}_z)^2$ and it suffices to show \eqref{eq:lemmicro}.  Note that both sides in \eqref{eq:lemmicro} are diagonal in the same basis, and hence it is enough to prove the bound
		$$ \beta x^d \geq \beta x_0^d - g(\alpha) - \alpha (x-x_0)^2 $$
		for all $x, x_0 \in [-1,1]$. To this end, we apply a standard Taylor expansion and obtain 
		\begin{align*} \beta x^d &= \beta x_0^d + \beta d x_0^{d-1}(x-x_0) + \beta \frac{d(d-1)}{2}  \xi^{d-2} (x-x_0)^2 \leq \beta x_0^d + \beta d x_0^{d-1}(x-x_0) + C (x-x_0)^2 \\
		& =  \beta x_0^d + (\beta d x_0^{d-1}(x-x_0) - (\alpha -C ) (x-x_0)^2) + \alpha (x-x_0)^2 
		\end{align*}
		with the abbreviation $C = |\beta| d(d-1)/2$. We can choose $g(\alpha)$ as the maximum of the function $(x,x_0) \mapsto \beta d x_0^{d-1}(x-x_0) - (\alpha -C ) (x-x_0)^2$ on $[-1,1]^2$ and the claimed property $\lim_{\alpha \to \infty} g(\alpha) = 0$ is easily checked.
		
	\end{proof}
	
	The main idea behind Lemma~\ref{lem:micro} is a comparison with quadratic models. The main trick is that we are able to counter balance the positive quadratic term in \eqref{eq:micro} and in parallel  derive a microcanonical expression.
	
	\begin{proof}[Proof of Proposition~\ref{prop:upper}]
		It is convenient to introduce the quadratic operator $Q_{\mathbf{m}} := N((\mathbf{M}_x - \mathbf{m}_x)^2 + (\mathbf{M}_y - \mathbf{m}_y)^2 + (\mathbf{M}_z - \mathbf{m}_z)^2)$ and we further recall the perturbed Hamiltonian $	H_{\textrm{P}, \mathfrak{b}}(\pmb{\gamma})$ from \eqref{eq:perturb}.
		For any fixed $\alpha > 0$,
		\begin{align}
			0 \leq \mathbb{E}_{\pmb{\gamma}}\left[ p_N(	H_{\textrm{P}, \mathfrak{b}}(\pmb{\gamma})) -p_N\left(	H_{\textrm{P}, \mathfrak{b}}(\pmb{\gamma}) - 2\alpha Q_{\langle \mathbf{M} \rangle_{H_{\textrm{P}, \mathfrak{b}}(\pmb{\gamma})}} \right) \right] \leq 2\alpha  \sum_{\mu = x,y,z}	\mathbb{E}_{\pmb{\gamma}}[\langle \mathring{\mathbf{M}}^2_\mu \rangle_{H_{\textrm{P}, \mathfrak{b}}(\pmb{\gamma})}],
		\end{align}
		where the last bound is due to the Boguliobov inequality \eqref{eq:bogu}. Note that Proposition~\ref{prop:fluc} implies that $\mathbb{E}_{\pmb{\gamma}}[\langle \mathring{\mathbf{M}}^2_\mu \rangle_{H_{\textrm{P}, \mathfrak{b}}(\pmb{\gamma})}]$ converges almost surely to zero. Since the Gaussian perturbation $\pmb{\gamma}$ does in expectation only shift the pressure by at most $\mathcal{O}(N^{-1/2})$ (see Lemma~\ref{lem:gauss}), we obtain the almost sure bound
		$$ \limsup_{N \to \infty} p_N(	H_{\textrm{P}, \mathfrak{b}}) \leq \limsup_{N \to \infty} \sup_{ \mathbf{m} \in B_{\rr^3}} p_N(	H_{\textrm{P}, \mathfrak{b}} - 2\alpha Q_{\mathbf{m}}) $$
		for every $\alpha > 0$. In a next step we employ the operator inequality from Lemma~\ref{lem:micro},
		$$  \limsup_{N \to \infty} p_N(	H_{\textrm{P}, \mathfrak{b}}) \leq  \limsup_{N \to \infty} \sup_{ \mathbf{m} \in B_{\rr^3}} \bigg(\textrm{P}(\mathbf{m}) + g_P(\alpha) + p_N( H_{\mathfrak{b}} - \alpha Q_{\mathbf{m}})  \bigg)    $$
		The Hamiltonian $- \alpha Q_{\mathbf{m}}$ is quadratic and negative definite. Hence, we have according to Proposition~\ref{prop:quadr} (or more precisely the dual identity \eqref{eq:quadr1}) the almost sure convergence
		\begin{align*}
			\lim_{N \to \infty} p_N( H_{\mathfrak{b}} - \alpha Q_{\mathbf{m}}) = \sup_{ \mathbf{m}' \in B_{\rr^3}} \bigg( -\alpha \left( (\mathbf{m}_x - \mathbf{m}'_x)^2 + (\mathbf{m}_y - \mathbf{m}'_y)^2  +  (\mathbf{m}_z - \mathbf{m}'_z)^2 \right) - \Lambda_{\mathfrak{b}}^{*}( \pmb{m}' ) \bigg).
		\end{align*}
	More precisely, we made also use of the fact that \eqref{eq:quadr1} implies \eqref{eq:main} for all negative definite quadratic polynomials plus a linear field, which can be proved by shifting the random fields $\mathbf{b}(n)$ by some constant field $\mathbf{h}$.
	We claim that the almost sure convergence extends to the whole variational expression in $\textbf{m}$, i.e.,
	\begin{align*}
		  \lim_{N \to \infty} &\sup_{ \mathbf{m} \in B_{\rr^3}} \bigg(\textrm{P}(\mathbf{m}) + g_P(\alpha) + p_N( H_{\mathfrak{b}} - \alpha Q_{\mathbf{m}})  \bigg)  \\
		= &\sup_{ \mathbf{m} \in B_{\rr^3}} \sup_{ \mathbf{m}' \in B_{\rr^3}} \bigg(\textrm{P}(\mathbf{m}) + g_P(\alpha) -\alpha \left( (\mathbf{m}_x - \mathbf{m}'_x)^2 + (\mathbf{m}_y - \mathbf{m}'_y)^2  +  (\mathbf{m}_z - \mathbf{m}'_z)^2 \right) - \Lambda_{\mathfrak{b}}^{*}( \pmb{m}') \bigg)  
	\end{align*} 
	This can be proved as Lemma~\ref{lem:conv}: due to Lemma~\ref{lem:convv} is to establish a uniform Lipschitz continuity of $\textrm{P}(\mathbf{m}) + g_P(\alpha) + p_N( H_{\mathfrak{b}} - \alpha Q_{\mathbf{m}})$ in $\mathbf{m}$. This is immediate for $\textrm{P}(\mathbf{m})$ and follows for the pressure $p_N( H_{\mathfrak{b}} - \alpha Q_{\mathbf{m}})$ from simple operator norm bound
	$$  \|Q_{\mathbf{m}_1} - Q_{\mathbf{m}_2} \| \leq 2N \|\mathbf{m}_1 - \mathbf{m}_2 \|. $$
	Since our bound holds true for every $\alpha > 0$ we may take the $\alpha \to \infty$ limit for which we recall that $g_P(\alpha) \to 0$ and therefore
	\begin{align*}
		\lim_{\alpha \to \infty} &\sup_{ \mathbf{m} \in B_{\rr^3}} \sup_{ \mathbf{m}' \in B_{\rr^3}} \bigg(\textrm{P}(\mathbf{m}) + g_P(\alpha) -\alpha \left( (\mathbf{m}_x - \mathbf{m}'_x)^2 + (\mathbf{m}_y - \mathbf{m}'_y)^2  +  (\mathbf{m}_z - \mathbf{m}'_z)^2 \right) - \Lambda_{\mathfrak{b}}^{*}( \pmb{m}') \bigg) \\
		&= \sup_{ \mathbf{m} \in B_{\rr^3}} \bigg(\textrm{P}(\mathbf{m}) - \Lambda_{\mathfrak{b}}^{*} (\pmb{m}) \bigg),
	\end{align*}
	which completes the proof.
	\end{proof}
	
	\subsection{Conclusion}\label{sec:con}
	
	So far, we have established the almost sure convergence
	$$ \lim_{N \to \infty} p_N(\textrm{P},\mathfrak{b}) = \sup_{\mathbf{m} \in B_{\rr^3}} \left(\textrm{P}(\mathbf{m}) - \Lambda^{*}_\mathfrak{b}(\mathbf{m}) \right)$$
	for all polynomial models. To complete the proof of Theorem~\ref{thm:main}, we need to extend this result to all continuous symbols $V$ and we further want to show that we also have $L^1$-convergence of the specific pressure. We start with the latter in the polynomial case.
	
	\begin{lemma}
			Let $\textrm{P} \colon \rr^3 \to \rr $ be a polynomial, $\mathfrak{b}$ an integrable vector valued random variable and $H_{\textrm{P},\mathfrak{b}}$ the associated random-field Hamiltonian(s) with specific pressures $p_N(\textrm{P},\mathfrak{b})$. Then, we have
		\begin{equation}\label{eq:L1}
			\lim_{N \to \infty} \mathbb{E} \left|  \sup_{\mathbf{m} \in B_{\rr^3}} \left(\textrm{P}(\mathbf{m}) - \Lambda^{*}_\mathfrak{b}(\mathbf{m}) \right)  	- p_N(\textrm{P},\mathfrak{b}) \right| = 0.
		\end{equation}
	\end{lemma}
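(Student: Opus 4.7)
The plan is to upgrade the almost sure convergence $p_N(\textrm{P},\mathfrak{b}) \to p(\textrm{P},\mathfrak{b})$, which is the combined content of Propositions~\ref{prop:lower} and~\ref{prop:upper}, to $L^1$ convergence by exhibiting an integrable random dominator and invoking Vitali's convergence theorem.

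First, I would establish a pointwise (in $\omega$) operator norm bound for $H_{\textrm{P},\mathfrak{b}}$. The scaling $\|\mathbf{M}_\mu\|\le 1$ together with the fact that $\textrm{P}$ has only finitely many monomials yields $\|H_{\textrm{P}}\|\le C_{\textrm{P}}\, N$ for some constant $C_{\textrm{P}}$ depending only on the polynomial. For the random field part, $\|\mathbf{S}(n)\|\le 1/2$ gives $\|H_{\mathfrak{b}}\|\le \sum_{n=1}^{N}|\mathbf{b}(n)|$. Combining these with the elementary two-sided bound $N\log 2 - \|H\|\le \log \tr e^{H} \le N\log 2 + \|H\|$ valid on $\mathcal{H}_N = (\cc^2)^{\otimes N}$, one obtains the almost sure pointwise estimate
$$|p_N(\textrm{P},\mathfrak{b})| \;\le\; \log 2 \,+\, C_{\textrm{P}} \,+\, \frac{1}{N}\sum_{n=1}^{N}|\mathbf{b}(n)| \;=:\; \log 2 + C_{\textrm{P}} + Y_N$$
for every $N\in\nn$.

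Second, by the $L^1$ strong law of large numbers for the i.i.d.\ integrable sequence $(|\mathbf{b}(n)|)_{n\ge 1}$, the dominator $Y_N$ converges both almost surely and in $L^1$ to $\mathbb{E}|\mathfrak{b}|$. A family of real random variables which is pointwise dominated by an $L^1$-convergent sequence is uniformly integrable; hence $\{p_N(\textrm{P},\mathfrak{b})\}_{N\ge 1}$ is uniformly integrable.

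Finally, the almost sure convergence to the deterministic constant $p(\textrm{P},\mathfrak{b})$ combined with uniform integrability yields $L^1$ convergence by Vitali's theorem, which is exactly \eqref{eq:L1}. There is no genuine obstacle in this argument; the only point worth flagging is that the dominator is itself random and only integrable in the limit, which is why one appeals to Vitali's theorem rather than Lebesgue's dominated convergence.
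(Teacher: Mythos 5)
Your argument is correct, and it is a more conceptual packaging of the same underlying estimates that the paper uses. You bound $|p_N|$ pointwise by $\log 2 + C_{\textrm{P}} + Y_N$ with $Y_N = \tfrac1N \sum_n |\mathbf{b}(n)|$, observe that $Y_N \to \mathbb{E}|\mathfrak{b}|$ in $L^1$ (e.g.\ via Scheff\'e, since $Y_N \ge 0$, $\mathbb{E}[Y_N] = \mathbb{E}|\mathfrak{b}|$ for all $N$, and $Y_N \to \mathbb{E}|\mathfrak{b}|$ a.s.), and conclude uniform integrability of $\{p_N\}$, so Vitali upgrades the a.s.\ convergence to $L^1$. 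The paper proceeds instead by hand: it introduces the "good" events $\Omega_N^c = \{\sum_n |\mathbf{b}(n)| \le 2N\,\mathbb{E}|\mathfrak{b}|\}$, splits $\mathbb{E}|p_N - p_\infty|$ into contributions from $\Omega_N$ and $\Omega_N^c$, applies dominated convergence on the good event where $p_N$ is bounded, and disposes of the bad-event contribution by a direct computation showing $\mathbb{E}[\mathbbm{1}_{\Omega_N} Y_N] \to 0$. These are two ways of saying the same thing: the paper's bad-event computation is precisely a verification of uniform integrability in disguise, while your route names the relevant concept and invokes the standard theorem. Your version is shorter and more transparent; the paper's is more self-contained in that it never needs to cite Vitali or the $L^1$ law of large numbers. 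One small point worth making explicit if you write this up: the fact that a sequence $|X_N| \le W_N$ with $\{W_N\}$ uniformly integrable is itself uniformly integrable uses the inclusion $\{|X_N| > K\} \subset \{W_N > K\}$ and hence $\mathbb{E}[|X_N|\mathbbm{1}_{|X_N|>K}] \le \mathbb{E}[W_N \mathbbm{1}_{W_N > K}]$, and that $L^1$ convergence of $\{W_N\}$ implies its uniform integrability; both are standard but deserve a word.
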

  
    \begin{proof}
    	Combining Proposition~\ref{prop:lower} and Proposition~\ref{prop:upper}, we know that
    	$$ \lim_{N \to \infty} p_N(\textrm{P},\mathfrak{b})  = \sup_{\mathbf{m} \in B_{\rr^3}} \left(\textrm{P}(\mathbf{m}) - \Lambda^{*}_\mathfrak{b}(\mathbf{m}) \right)   \quad \text{a.s.} $$
    	and the only obstacle for the $L^1$-convergence is that we need to show that rare events do not contribute to the expectation of $p_N(\textrm{P},\mathfrak{b})$. We define the events
    	$$\Omega_N := \left\{ \omega \in \Omega \, \bigg| \, \sum_{n = 1}^{N} \mathbf{b}(n) > 2N \mathbb{E}[\mathfrak{b}]\right\}$$ for $N \in \nn$. We introduce the abbreviation $p_{\infty} := \sup_{\mathbf{m} \in B_{\rr^3}} \left(\textrm{P}(\mathbf{m}) - \Lambda^{*}_\mathfrak{b}(\mathbf{m}) \right) $ and employ the triangle inequality
    	\begin{align*}
    		\mathbb{E}[| p_N(\textrm{P},\mathfrak{b}) - p_{\infty} |] \leq \mathbb{E} [| \mathbbm{1}_{\Omega_N^c} (p_N(\textrm{P},\mathfrak{b}) - p_{\infty}) |] + \mathbb{E}[| \mathbbm{1}_{\Omega_N} p_{\infty}|] +
    		\mathbb{E}[| \mathbbm{1}_{\Omega_N} \, p_N(\textrm{P},\mathfrak{b})|] .
    	\end{align*}
    	Note that $\| H_P \| \leq CN$ and hence we obtain by an operator norm estimate 
    	\begin{equation}\label{eq:pbound} |p_N(\textrm{P},\mathfrak{b})| \leq \ln 2 + C + \frac{1}{N} \sum_{n = 1}^{N} \mathbf{b}(n).
    	\end{equation}
    	The strong law of large numbers implies that $\mathbbm{1}_{\Omega_N^c}$ converges almost surely to $1$ and employing further the bound \eqref{eq:pbound}, we have by dominated convergence $\lim_{N \to \infty} \mathbb{E} [| \mathbbm{1}_{\Omega_N^c} (p_N(\textrm{P},\mathfrak{b}) - p_{\infty}) |] =0 $.
    	The convergence $\lim_{N \to \infty} \mathbb{E}[| \mathbbm{1}_{\Omega_N} p_{\infty}|] = 0$ is immediate and it remains to estimate the last term.
    	To this end, we observe that 
    	$$ \mathbb{E}\left[ \mathbbm{1}_{\Omega_N} \frac1N \sum_{n = 1}^{N} |\mathbf{b}(n)| \right] = \mathbb{E}[|\mathfrak{b}|] - \mathbb{E}\left[ \mathbbm{1}_{\Omega_N^c} \frac1N \sum_{n = 1}^{N} |\mathbf{b}(n)| \right] $$
    	and by dominated convergence the right-hand side converges to zero. We apply the bound \eqref{eq:pbound}, which yields by the above $\lim_{N \to \infty} \mathbb{E}[| \mathbbm{1}_{\Omega_N} \, p_N(\textrm{P},\mathfrak{b})|] = 0.$ This completes the proof.
    	
    \end{proof}
    
    The final step towards the proof of Theorem~\ref{thm:main} is the following lemma.
    \begin{lemma}\label{lem:Vbound}
    	Let $H_V, H_{V'}$ be Hamiltonians meeting Assumption~\ref{ass:symbol} for two continuous symbols $V, \, V' : B_{\rr^3} \to \rr$. Then,
    	\begin{equation}\label{eq:Vbound}
    		\| H_V - H_{V'} \| \leq N \| V - V' \|_{\infty} + o(N),
    	\end{equation}
    	where $\| \cdot \|_{\infty}$ denotes the uniform norm on the space of functions.
    \end{lemma}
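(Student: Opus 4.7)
The plan is to exploit the block-diagonal structure of $H_V$ and $H_{V'}$ guaranteed by Assumption~\ref{ass:symbol} and reduce the bound to a comparison at the level of coherent-state symbol integrals. Concretely, since both Hamiltonians are block diagonal with respect to the orthogonal decomposition \eqref{def:blockH}, the operator norm of their difference equals the maximum of the block norms:
\begin{equation*}
\| H_V - H_{V'} \| = \max_{J, \alpha} \| H_{J, \alpha}^{V} - H_{J, \alpha}^{V'} \|.
\end{equation*}
Hence it is enough to establish the claimed bound block by block.

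On each block $(J, \alpha)$, I would apply the triangle inequality together with the approximability condition \eqref{ass:symbol} for both $V$ and $V'$ to reduce to estimating
\begin{equation*}
\left\| \frac{2J+1}{4\pi} \int N\bigl(V - V'\bigr)\!\Bigl(\tfrac{2J}{N} \mathbf{e}(\Omega) \Bigr) \, |\Omega, J\rangle\langle \Omega, J| \, d\Omega \right\|,
\end{equation*}
the two error terms from Assumption~\ref{ass:symbol} contributing the $o(N)$ on the right-hand side of \eqref{eq:Vbound}, and this $o(N)$ being uniform in $J, \alpha$ by the same assumption.

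The key step is the following bound for the operator norm of any coherent-state symbol integral: for any continuous $f : S^2 \to \rr$,
\begin{equation*}
\left\| \frac{2J+1}{4\pi} \int f(\Omega)\, |\Omega, J\rangle\langle \Omega, J| \, d\Omega \right\| \leq \| f \|_{\infty}.
\end{equation*}
This follows at once from the resolution of unity \eqref{eq:completeness}: for any unit vector $\psi \in \mathbb{C}^{2J+1}$,
\begin{equation*}
\left| \Bigl\langle \psi \Bigl| \tfrac{2J+1}{4\pi} \int f(\Omega)\, |\Omega, J\rangle\langle \Omega, J| \, d\Omega\Bigr| \psi \Bigr\rangle \right| \leq \| f \|_{\infty} \, \tfrac{2J+1}{4\pi} \int |\langle \psi | \Omega, J \rangle|^2 \, d\Omega = \| f \|_{\infty}.
\end{equation*}
Applying this to $f(\Omega) = N(V - V')(\tfrac{2J}{N}\mathbf{e}(\Omega))$ and noting that $\tfrac{2J}{N} \mathbf{e}(\Omega) \in B_{\rr^3}$ for all $\Omega$ and all admissible $J$, the integral is bounded in operator norm by $N \| V - V' \|_{\infty}$.

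I do not anticipate a serious obstacle here; the only point requiring a touch of care is making sure that the $o(N)$ error in Assumption~\ref{ass:symbol} is uniform over the blocks (which it is, by the statement of \eqref{ass:symbol}) so that taking the maximum preserves the $o(N)$ bound.
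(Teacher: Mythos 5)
Your proof is correct and follows essentially the same route as the paper: reduce via Assumption~\ref{ass:symbol} and the block-diagonal structure to bounding the coherent-state upper-symbol operator, then use the resolution of unity \eqref{eq:completeness} to bound diagonal matrix elements $\langle\psi|\cdot|\psi\rangle$ by $N\|V-V'\|_\infty$ uniformly in $J$. Your explicit statement of the abstract bound $\|\tfrac{2J+1}{4\pi}\int f(\Omega)|\Omega,J\rangle\langle\Omega,J|\,d\Omega\| \le \|f\|_\infty$ for real continuous $f$ is a clean way to package exactly the computation the paper carries out inline.
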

    \begin{proof}
    	Due to Assumption~\ref{ass:symbol}, it is enough to show 
    	$$ \left\|  \frac{2J+1}{4\pi} \int  N \left(V\Big(\frac{2J}{N}  \mathbf{e}(\Omega) - V'\Big(\frac{2J}{N}  \mathbf{e}(\Omega) \Big) \right) \,  \big| \Omega, J \rangle \langle \Omega, J \big| \, d\Omega  \right\| \leq N \| V - V' \|_{\infty} $$
    	for all $ J \in \{ \frac{N}{2} - \lfloor \frac{N}{2}\rfloor, \dots , N/2\} $. We recall that the above operator acts on $\cc^{2J +1}$ (or more precisely on a unitarily equivalent block). We pick a $\psi \in \cc^{2J +1}$ with $\| \psi \| = 1$ and estimate
    	\begin{align*}
    		&N \,\left| \frac{2J+1}{4\pi} \left\langle \psi, \left[\int   \left(V\Big(\frac{2J}{N}  \mathbf{e}(\Omega) - V'\Big(\frac{2J}{N}  \mathbf{e}(\Omega) \Big) \right) \,  \big| \Omega, J \rangle \langle \Omega, J \big| \, d\Omega \right]  \psi \right \rangle \right| \\
    		 = &N \, \left| \frac{2J+1}{4\pi}  \int   \left(V\Big(\frac{2J}{N}  \mathbf{e}(\Omega) - V'\Big(\frac{2J}{N}  \mathbf{e}(\Omega) \Big) \right) \,  |\langle \psi \big| \Omega, J \rangle |^2 \, d\Omega  \right| \\
    		 \leq &N \| V - V' \|_{\infty} \frac{2J+1}{4\pi} \, \int |\langle \psi \big| \Omega, J \rangle |^2 \, d\Omega  = N \| V - V' \|_{\infty},
    	\end{align*}
    	where we used the resolution of identity \eqref{eq:completeness} in the last step. Since this holds for any normalized $\psi$ the operator norm estimate follows.
    \end{proof}
    
    Completing the proof of Theorem~\ref{thm:main} is now straightforward.
    \begin{proof}[Proof of Theorem~\ref{thm:main}]
    	We use the notation $p(V,\mathfrak{b}) := \sup_{\mathbf{m} \in B_{\rr^3}} \left(V(\mathbf{m}) - \Lambda^{*}_\mathfrak{b}(\mathbf{m}) \right)$. We fix some $\varepsilon > 0$ and due to the Weierstrass theorem we may find a polynomial $\textrm{P}: B_{\rr^3} \to \rr$ such that $\|V - \textrm{P} \|_{\infty} < \varepsilon$. Then,
    	\begin{align*}
    		|p_N(V,\mathfrak{b}) - p(V,\mathfrak{b})| &\leq |p_N(P,\mathfrak{b}) - p(P,\mathfrak{b})| + |p_N(P,\mathfrak{b}) - p_N(V,\mathfrak{b})| + |p(V,\mathfrak{b}) - (P,\mathfrak{b})| \\
    		&\leq 2 \varepsilon + o(N) + |p_N(P,\mathfrak{b}) - p(P,\mathfrak{b})|,
    	\end{align*}
    	where we applied Lemma~\ref{lem:Vbound} and the simple bound $|p(V,\mathfrak{b}) - p(V',\mathfrak{b})| \leq \|V - V' \|_{\infty}$ for the limiting functional. Since $|p_N(P,\mathfrak{b}) - p(P,\mathfrak{b})|$ converges to zero almost surely and in expectation, the theorem follows as $\varepsilon >0$ can be chosen arbitrarily.
    \end{proof}

\appendix

\section{Supplementary Proofs}
In this appendix, we present some proofs which have been excluded from the main text. We start with Lemma~\ref{lem:iden} whose assertions are more or less folklore.
\begin{proof}[Proof of Lemma~\ref{lem:iden}]
	\begin{enumerate}
		\item Using the spectral decomposition, we have by definition
		\begin{align*}
			\langle \mathring{\mathbf{M}}^2_\mu \rangle_{H_{\textrm{P}, \mathfrak{b}}(\pmb{\gamma})} &= \tr \varrho_{{H_{\textrm{P}, \mathfrak{b}}(\pmb{\gamma})}} \mathring{\mathbf{M}}^2_\mu = \sum_{k} \langle \psi_k, \varrho_{{H_{\textrm{P}, \mathfrak{b}}(\pmb{\gamma})}} \mathring{\mathbf{M}}^2_\mu \psi_k \rangle \\
					&= \frac{1}{Z_N(H_{\textrm{P}, \mathfrak{b}}(\pmb{\gamma}))} \sum_{k} e^{E_k}  \langle \psi_{k}, \mathring{\mathbf{M}}^2_\mu \psi_k \rangle = \frac{1}{Z_N(H_{\textrm{P}, \mathfrak{b}}(\pmb{\gamma}))} \sum_{k,l} e^{E_k}  \langle \psi_{k}, \mathring{\mathbf{M}}_\mu \psi_l \rangle \langle \psi_{l}, \mathring{\mathbf{M}}_\mu \psi_k \rangle \\
					&= \frac{1}{Z_N(H_{\textrm{P}, \mathfrak{b}}(\pmb{\gamma}))} \sum_{k,l} e^{E_k} |M^{\mu}_{k,l}|^2 = \frac{1}{2 Z_N(H_{\textrm{P}, \mathfrak{b}}(\pmb{\gamma}))} \sum_{k,l} (e^{E_k} + e^{E_l}) |M^{\mu}_{k,l}|^2,
		\end{align*}
		where we used the resolution of identity $\mathbbm{1} = \sum_{l} | \psi_l \rangle \langle \psi_l |$ in the second line and the definition of $|M^{\mu}_{k,l}|$ and symmetry in the third line.
		\item We employ again the spectral decomposition
		\begin{align*}
			\langle [\mathbf{M}_\mu, H_{\textrm{P}, \mathfrak{b}}(\pmb{\gamma})]^2 \rangle_{H_{\textrm{P}, \mathfrak{b}}(\pmb{\gamma})} &= \frac{1}{Z_N(H_{\textrm{P}, \mathfrak{b}}(\pmb{\gamma}))} \sum_{k,l} e^{E_k}  \langle \psi_{k}, [\mathbf{M}_\mu, H_{\textrm{P}, \mathfrak{b}}(\pmb{\gamma})]  \psi_l \rangle \langle   \psi_{l},  [\mathbf{M}_\mu, H_{\textrm{P}, \mathfrak{b}}(\pmb{\gamma})]\psi_k \rangle \\
			&= \frac{1}{Z_N(H_{\textrm{P}, \mathfrak{b}}(\pmb{\gamma}))} \sum_{k,l} e^{E_k} M^{\mu}_{k,l} (E_l - E_k) M^{\mu}_{l,k} (E_k - E_l)
			\\ &= - \frac{1}{2 Z_N(H_{\textrm{P}  \mathfrak{b}}(\pmb{\gamma}))} \sum_{k,l = 1}^{2^N} |M_{k,l}^{\mu}|^2 (e^{E_k} + e^{E_l})|E_k - E_l|^2,
		\end{align*}
		where the first line follows as the in the previous part, the second line is immediate and the third line is again by symmetry.
		\item Let us first give a coordinate-free description of $\frac{\partial^2}{\partial \gamma_u^2} p_N(H_{\textrm{P}, \mathfrak{b}}(\pmb{\gamma}))$. Clearly $\frac{\partial^2}{\partial \gamma_u^2} p_N = \frac{ \partial_{\gamma_u}^2 Z_N}{Z_N} - \left(\frac{ \partial_{\gamma_u} Z_N}{Z_N} \right)^2$
		and the first derivative is well known to be $ \partial_{\gamma_u} Z_N = \tr \mathbf{M}_{\mu} e^{H}$. For the second derivative, we invoke the Duhamel formula to obtain
		$$ \partial_{\gamma_u}^2 Z_N(H_{\textrm{P}, \mathfrak{b}}(\pmb{\gamma})) = \int_{0}^{1} \tr \left[ e^{(1-s)H_{\textrm{P}, \mathfrak{b}}(\pmb{\gamma})} \mathbf{M}_\mu e^{s H_{\textrm{P}, \mathfrak{b}}(\pmb{\gamma})} \mathbf{M}_\mu \right] \, ds. $$
		Let us denote by $\widehat{M}^{\mu}_{k,l}$ the matrix elements of $\mathbf{M}_\mu$ in the eigenbasis $\psi_k$ of $H_{\textrm{P}, \mathfrak{b}}(\pmb{\gamma})$. Note that the coefficients $M^{\mu}_{k,l}$ refer in contrast to the centered observable $ \mathbf{M}_\mu - \langle \mathbf{M}_\mu \rangle_{H_{\textrm{P}, \mathfrak{b}}(\pmb{\gamma})}$. As in the first part, we arrive at 
		\begin{align*} \partial_{\gamma_u}^2 Z_N(H_{\textrm{P}, \mathfrak{b}}(\pmb{\gamma})) &= \sum_{k,l} \int_{0}^{1} e^{(1-s)E_k + s E_l} |\widehat{M}^{\mu}_{k,l}|^2 \, ds = \sum_{k} |\widehat{M}^{\mu}_{k,k}|^2 e^{E_k} + \sum_{k \neq l} |\widehat{M}^{\mu}_{k,l}|^2 \frac{e^{E_k} - e^{E_l}}{E_k - E_l} \\ 
			&= \sum_{k} |\widehat{M}^{\mu}_{k,k}|^2 e^{E_k} + \sum_{k \neq l} |M^{\mu}_{k,l}|^2 \frac{e^{E_k} - e^{E_l}}{E_k - E_l}
		\end{align*}
		On the other hand, $\left(\frac{ \partial_{\gamma_u} Z_N}{Z_N} \right)^2 = \left(\langle \mathbf{M}_\mu \rangle_{H_{\textrm{P}, \mathfrak{b}}(\pmb{\gamma})} \right)^2$ and  $$\frac{1}{Z_N} \sum_{k} |\widehat{M}^{\mu}_{k,k}|^2 e^{E_k} = \frac{1}{Z_N} \sum_{k} |M^{\mu}_{k,k}|^2 e^{E_k} + \left(\langle \mathbf{M}_\mu \rangle_{H_{\textrm{P}, \mathfrak{b}}(\pmb{\gamma})} \right)^2$$. 
		The claim follows by combining the above identities.
		
	\end{enumerate}
\end{proof}

The Lemma~\ref{lem:cosh} contains an elementary inequality which is proved next.

\begin{proof}[Proof of Lemma~\ref{lem:cosh}]
	We observe that for $x > 0$
	\begin{align*}
		\cosh x &= \sinh x + e^{-x} = \sinh x + \frac{\sinh x}{x} + \frac{x e^{-x} - \sinh(x)}{x} \\
		&= \sinh x + e^{-x} = \sinh x + \frac{\sinh x}{x} + \frac{e^{-x}\left(x+\frac12 -e^{2x}\right)}{x} 
	\end{align*}
   and the last term is negative due to the elementary inequality $e^{u} \geq 1 + u$. Hence, we arrive at $\cosh x \leq \sinh x + \frac{\sinh x}{x}.$
\end{proof}

The proof of Lemma~\ref{lem:conv} relies on the following basic fact.
\begin{lemma}\label{lem:convv}
	Let $K \subset \rr^d$ be a compact set and $(\Omega, \mathcal{F}, \mathbb{P})$ a fixed probability space. Let further $X_N : K \times \Omega \to \rr $ be a sequence of real-valued maps with the following properties.
	\begin{enumerate}
		\item For each fixed $z \in K$, the maps $\omega \mapsto X_N(z, \omega)$ are random variables which converge almost surely to a random variable $X(z,\omega)$.
		\item The maps $X_N$ are uniformly Lipschitz in the first variable, i.e.,  there is an $L > 0$ independent of $z, \omega,N$ such that 
		\begin{equation}
			|X_N(z,\omega) - X_N(z', \omega)| \leq L |z -z'|
		\end{equation}
		for all $z,z' \in K, \omega \in \Omega, N \in \nn.$
	\end{enumerate}
	Then, the random variables $\inf_{z \in K} X_N(z,\omega)$ converge almost surely to $ \inf_{z \in K} X(z,\omega)$.
\end{lemma}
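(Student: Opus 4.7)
The plan is to reduce the pointwise almost-sure convergence plus uniform Lipschitz control to a finite-dimensional statement by means of a compactness/net argument. The uniform Lipschitz hypothesis is what allows the infimum over the (uncountable) compact set $K$ to be compared, up to arbitrarily small error, with a minimum over a finite set of points; on such a finite set the almost-sure convergence becomes a trivial consequence of the union of finitely many null sets.

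In more detail, I would first fix a countable dense subset $D \subset K$. By assumption (1) and countability, there is an event $\Omega_0$ with $\mathbb{P}(\Omega_0) = 1$ on which $X_N(z,\omega) \to X(z,\omega)$ simultaneously for every $z \in D$. Passing to the limit in the Lipschitz estimate $|X_N(z,\omega) - X_N(z',\omega)| \leq L|z-z'|$ shows that, on $\Omega_0$, the map $z \mapsto X(z,\omega)$ restricted to $D$ is $L$-Lipschitz and hence extends uniquely to a continuous $L$-Lipschitz function on $K$ (which we still call $X(\cdot,\omega)$). Now fix $\omega \in \Omega_0$ and $\varepsilon > 0$, and choose a finite subset $\{z_1,\dots,z_M\} \subset D$ which is an $(\varepsilon/L)$-net of $K$; this is possible by compactness. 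For any $z \in K$, picking $z_i$ with $|z - z_i| \leq \varepsilon/L$ gives $X_N(z_i,\omega) \leq X_N(z,\omega) + \varepsilon$, so that
\begin{equation*}
  \inf_{z \in K} X_N(z,\omega) \;\leq\; \min_{1 \leq i \leq M} X_N(z_i,\omega) \;\leq\; \inf_{z \in K} X_N(z,\omega) + \varepsilon,
\end{equation*}
and exactly the same sandwich holds with $X_N$ replaced by $X$.

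Finally, since $\{z_1,\dots,z_M\}$ is a finite subset of $D$, on $\Omega_0$ we have $\min_i X_N(z_i,\omega) \to \min_i X(z_i,\omega)$ as $N \to \infty$. Combining this with the two sandwich estimates yields
\begin{equation*}
  \limsup_{N \to \infty} \left| \inf_{z \in K} X_N(z,\omega) - \inf_{z \in K} X(z,\omega) \right| \leq 2\varepsilon
\end{equation*}
for every $\omega \in \Omega_0$. Since $\varepsilon > 0$ was arbitrary, the limit is zero on the full-measure set $\Omega_0$, which is the claim. The only subtle point is the usual one with almost-sure convergence of families indexed by an uncountable parameter: one must not directly use a.s.\ convergence at each $z \in K$, because the null exceptional set could depend on $z$. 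The uniform Lipschitz hypothesis is precisely what lets us bypass this by working on a countable dense subset and then extending by continuity; no deeper ingredient is needed.
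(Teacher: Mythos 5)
Your proof is correct and takes essentially the same compactness plus $\varepsilon$-net approach as the paper: reduce the infimum over $K$ to a minimum over a finite net via the uniform Lipschitz bound, invoke almost-sure convergence on the finite set, and let $\varepsilon \to 0$. You are somewhat more explicit than the paper on two technical points — extracting a single full-measure event by working on a countable dense subset before choosing the net, and verifying that the limit $X(\cdot,\omega)$ inherits the $L$-Lipschitz bound so that the sandwich estimate applies to $X$ as well as to $X_N$ — but these are details the paper's argument implicitly relies on, not a different route.
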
 
\begin{proof}
	Due to continuity, $\inf_{z \in K} X_N(z,\omega)$  and $ \inf_{z \in K} X(z,\omega)$ are measurable for all $N$. Thanks to compactness, for any fixed $\varepsilon >0$ there exists a finite collection of points $z_1, \ldots, z_M \in K$ such that $ \min_{ 1 \leq m \leq  M} |z - z_m | < \varepsilon$ for every $z \in K$. We employ the Lipschitz bound to obtain
	\begin{align*}
		\left| \inf_{z \in K} X_N(z,\omega) - \inf_{z \in K} X(z,\omega) \right| \leq 2 L \varepsilon + \left| \min_{1 \leq m \leq M} X_N(z_m,\omega) - \min_{1 \leq m \leq M} X(z_m,\omega) \right| 
	\end{align*}
	and the last term converges almost surely to zero as $N \to \infty$. Since $\varepsilon > 0$ is arbitrary, the claim follows.
\end{proof}

We are now ready to spell out the proof of Lemma~\ref{lem:conv}.
\begin{proof}[Proof of Lemma~\ref{lem:conv}]
	In view of \eqref{eq:pconv} and Lemma~\ref{lem:convv}, it is enough to verify the Lipschitz continuity for the linear pressure $ p_N(L(\pmb{\alpha}, \pmb{m}, \{\mathbf{b}(n)\}_{n = 1}^{N}))$. We note that
	$$ L(\pmb{\alpha}, \pmb{m}, \{\mathbf{b}(n)\}_{n = 1}^{N}) - L(\pmb{\alpha}, \pmb{m}', \{\mathbf{b}(n)\}_{n = 1}^{N}) = 4 \left(\alpha_x (m_x'-m_x) \mathbf{S}_x + \alpha_y (m_y'-m_y) \mathbf{S}_y + \alpha_z (m_z'-m_z) \mathbf{S}_z \right)  $$ 
	for $\mathbf{m}, \mathbf{m}' \in B_{\rr^3}$. In particular,
	\begin{align*}
		|p_N(L(\pmb{\alpha}, \pmb{m}, \{\mathbf{b}(n)\}_{n = 1}^{N})) - p_N(L(\pmb{\alpha}, \pmb{m}', \{\mathbf{b}(n)\}_{n = 1}^{N}))| &\leq \| L(\pmb{\alpha}, \pmb{m}, \{\mathbf{b}(n)\}_{n = 1}^{N}) - L(\pmb{\alpha}, \pmb{m}', \{\mathbf{b}(n)\}_{n = 1}^{N})   \| \\ & \leq 2 \left( \max_{\mu = x,y,z} \alpha_\mu \right) \| \pmb{m} - \pmb{m}' \| 
	\end{align*}
	which establishes the uniform Lipschitz continuity.
\end{proof}
\minisec{Acknowledgments}
This work was funded by the Deutsche Forschungsgemeinschaft (DFG, German Research Foundation) - 558731723.

\bibliographystyle{plain}

\end{document}